\title{An iterative approach for counting reduced ordered binary decision diagrams}
\titlerunning{An iterative approach for counting ROBDDs} 
\author{Julien Cl\'ement}{Normandie Universit\'e, UNICAEN, ENSICAEN, CNRS, GREYC - UMR 6072}
{Julien.Clement@unicaen.fr}{}{}
\author{Antoine Genitrini}{Sorbonne Universit\'e, CNRS, LIP6 - UMR 7606, F-75005 Paris, France }{Antoine.Genitrini@lip6.fr}{}{}
\authorrunning{J. Cl\'ement and A. Genitrini}
\keywords{Boolean Function, Reduced Ordered Binary Decision Diagram (\textsc{robdd}), Enumerative Combinatorics, 
	Directed Acyclic Graph}
\newtheorem*{example*}{Example}
\newtheorem*{fact*}{Fact}
\newcommand\B{\mathcal{B}}
\newcommand{\BDD}{\textsc{robdd}}
\newcommand{\BDDs}{\textsc{robdd}s}
\newcommand{\True}{{\normalfont\texttt{1}}}
\newcommand{\False}{{\normalfont\texttt{0}}}
\newcommand{\Tnode}{\top}
\newcommand{\Fnode}{\bot}
\DeclareRobustCommand{\stirlingII}{\genfrac\{\}{0pt}{}}
\newcommand{\SEQ}[1]{[#1]}
\DeclareMathOperator*{\degree}{deg}
\begin{document}
	
	\maketitle
	
\begin{abstract}
  For three decades \emph{binary decision diagrams}, a data structure efficiently representing Boolean functions, 
  have been widely used in many distinct contexts like model verification, machine learning, cryptography and also resolution of combinatorial problems.
  The most famous variant, called reduced ordered binary decision diagram (\textsc{robdd} for short), can be viewed as the result of a compaction procedure on the full decision tree.
  A useful property is that once an order over the Boolean variables is fixed, each Boolean function is represented by exactly one \textsc{robdd}.
  In this paper we aim at computing the \emph{exact distribution of the Boolean functions in~$k$~variables according to the \textsc{robdd} size}, where the \textsc{robdd} size is equal to the number of decision nodes of the underlying directed acyclic graph (\textsc{dag}) structure.
  Recall the number of Boolean functions with~$k$ variables is equal to~$2^{2^k}$, which is of double exponential growth with respect to the number of variables. 
  The maximal size of a \textsc{robdd} with~$k$ variables is~$M_k \approx 2^k / k$. 
  Apart from the natural combinatorial explosion observed, another difficulty for computing the distribution according to size is to take into account dependencies within the \textsc{dag} structure of \textsc{robdd}s.
In this paper, we develop the first polynomial algorithm to derive the distribution of Boolean functions over~$k$ variables  with respect to \textsc{robdd} size denoted by~$n$.
  The algorithm computes the (enumerative) generating function of \textsc{robdd}s with~$k$ variables up to size~$n$. 
  It performs~$O(k\; n^4)$ arithmetical operations on integers and
  necessitates storing~$O((k+n) n^2)$ integers with bit length~$O(n\log n)$.
  Our new approach relies on a decomposition of \textsc{robdd}s layer by layer and on an inclusion-exclusion argument.

\end{abstract}

%


\section{Introduction}
\label{sec:intro}

Three decades ago a central data structure in computer science,
designed to represent Boolean functions, emerged under the name of Binary Decision Diagrams (or \textsc{bdd}s)~\cite{Bryant86}.
Their algorithmic paradigm gives great advantages:
it is based on a \emph{divide-and-conquer} approach combined
with a compaction process. 
Their benefits compared to other Boolean representations are so obvious that several dozens of \textsc{bdd}
variants have been developed in recent years.
In his monograph~\cite{Wegener00}, Wegener presents several ones 
like \textsc{robdd}s~\cite{Bryant92}, \textsc{okfbdd}s~\cite{DSTBP94}, 
\textsc{qobdd}s~\cite{Wegener94}, \textsc{zbdd}s~\cite{Minato93}, and others. 
While most of these data structures are used in the context of verification~\cite{Wegener00}, 
they also appear, for example, in the context of cryptography~\cite{KJFB06} or knowledge compilation \cite{DarwicheMarquis02}.
Also, the size of the structure, depending on the compaction of a decision tree,
allows improving classification in the context of machine learning~\cite{MBFV04}.  
Finally, some specific \textsc{bdd}s are relevant to strategies for the resolution of combinatorial problems,
cf.~\cite[vol. 4]{Knuth11}, like the classical satisfiability count problem.

The classical way to represent the different diagrams consists in their embedding as directed acyclic graphs (or \textsc{dag}s).
In the following we are interested in the original form of decision diagrams
that are \textsc{robdd}s, for \emph{Reduced Ordered Binary Decision Diagrams}.
One of their fundamental properties relies on the single, thus canonical, representative property
for each Boolean function (with a given order over the Boolean variables).
In his book~\cite{Knuth11} Knuth recalls and proves several combinatorial results
for \textsc{robdd}s. He is, for example, interested in the profile of a typical \textsc{robdd},
or in the way to combine two structures to represent a more complex Boolean function.
However, thirty years after the takeoff of \textsc{robdd}s, the study of the distribution of Boolean functions with respect to the size, defined as the number of decision nodes (see~Figure~\ref{fig:example}), of the \textsc{dag} structure is not totally understood.
The main problem 
is that no recursive characterization describing the structure of \textsc{robdd}s is known,
as opposed, for instance, to the recursive decomposition of binary trees which is the core
approach in their combinatorial studies (profile, width, depth).

\subparagraph*{Related work.}
 An important step in the comprehension of the distribution of the Boolean functions
according to their \textsc{robdd} size has been achieved by Wegener~\cite{Wegener94} and improved by Gröpl \emph{et al.}~\cite{GPS04}. These authors proved that almost all functions have the same \textsc{robdd} size up to a factor of~$1 + o(1)$ when the number of variables~$k$ tends to infinity, exhibiting the Shannon effect (strong or weak depending on the value of~$k$).
The strong (respectively weak) Shannon
effect states that almost all functions have the same \textsc{robdd} size as the largest \textsc{robdd}s up to a factor of~$1 + f(k)$ with~$f(k)=o(1)$ (resp.~$f(k)=\Omega(1)$) as~$k$ tends to infinity (see also \cite{BV05}).
The reader may find an illustration of this phenomenon in Figure~\ref{fig:results} with the plot of the exact distribution for~$k=13$ variables.
A consequence of these first analyses is that picking \emph{uniformly at random} a Boolean function
whose \textsc{robdd} is small is not an easy task,
although in practice \textsc{robdd}s are often not of exponential size (with respect to~$k$).

In~\cite{NV19}, the authors study, experimentally, numerically,
and theoretically, the size of \textsc{robdd}s when the number~$k$ of variables is increasing.
However, their main approach relies on an exhaustive enumeration of the decision trees of all Boolean functions,
that are in a second step compressed into \textsc{robdd}s. The doubly exponential growth of Boolean functions
over~$k$ variables, equal to~$2^{2^k}$, allows only to compute the first values for~$k= 1, \dots, 4$.
Then the authors extrapolate the distributions by sampling decision trees (uniformly at random).

Later in the paper~\cite{CG20} we obtain similar combinatorial results.
Using a new approach based on a partial recursive decomposition, we partition the \textsc{robdd}s according to their profile (which describes the number of nodes per level in the \textsc{dag}). Another key feature of \cite{CG20} is that we can restrict ourselves to a maximal size~$n$ for \textsc{robdd}s, as opposed to the exhaustive-oriented approach of \cite{NV19}. 
Although more efficient, still algorithms with this approach are bound to be at least of complexity~$\Omega(n\; k^{3/2 \cdot k^2 / \log k})$, while using a huge amount of extra memory.
However, after a lengthy computation we obtain the exact distributions of the size of \textsc{robdd}s up to~$k=9$, thus partitioning the set of~$2^{512}$ Boolean functions into \textsc{robdd}s of sizes ranging from~$0$ to~$141$.

\subparagraph*{Main results.}
In this paper, we describe an algorithm that calculates the exact distribution for \textsc{robdd}s up to size~$n$
in time complexity~$O(k\; n^4)$ using~$O((n+k)n^2)$ extra storage memory for integers. 
To the best of our knowledge, this is the first polynomial complexity
algorithm computing the distribution of the \textsc{robdd} size for Boolean functions.
Our combinatorial approach is based on an iteration process instead of a recursive approach.

We improve drastically on previous work and all the extrapolated results presented
in~\cite{NV19} for Boolean functions up to~$13$ variables are now fully and exactly described.
Using a personal computer, in a couple of minutes we obtain an exhaustive counting of the \textsc{robdd}s
representing functions over~$11$ variables. 
With a computer with several hundreds gigabytes of \textsc{ram} we compute the distribution over~$13$ variables in about a day.
Indeed, we partition the~$2^{8192}$ Boolean functions according to their \textsc{robdd}s size (which ranges from~$0$ to~$1277$).

In Figure~\ref{fig:results} the exact distribution is depicted in two ways of presentation:
a red point~$(x, y)$ states that~$2^y$ functions have a \textsc{robdd} size~$x$,
in logarithmic scale;
the blue curve is the probability distribution.

\begin{figure}[htbp]
	\centering{\includegraphics[width=0.65\linewidth]{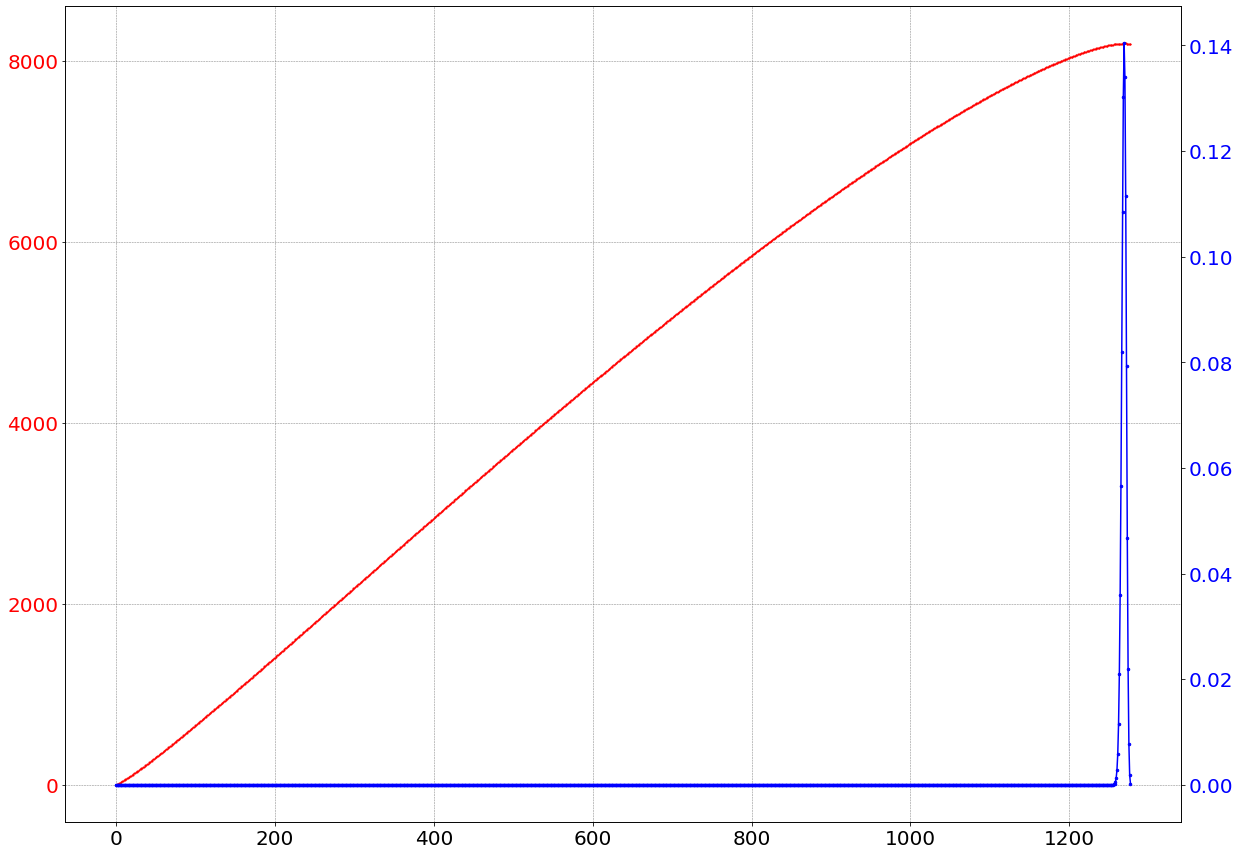}}
	\caption{\label{fig:results}
		The \textsc{robdd} size distribution of Boolean functions in~$13$ variables. The exact distribution is depicted in two ways of presentation: the red curve is the logarithmic scale of the distribution; the blue curve is the probability distribution.}
\end{figure}

\subparagraph*{Organization of the paper.}
In Section~\ref{sec:preliminaries} we present the formal notions and objects necessary for the description of our counting approach.
Section~\ref{sec:counting} presents the iterative process for computing the number of \textsc{robdd}s having a given profile (the profile describes the number of decision nodes labelled with each variable).
Section~\ref{sec:counting} also states the main result of this paper which is a formula for computing the number of \textsc{robdd}s involving linear maps over a polynomial ring.
Finally, Section~\ref{sec:algos} presents the algorithmic context for computing the complete distribution (under the form of the enumerative generating function \cite{FlajoletSedgewick2009} of \textsc{robdd}s).

\section{Preliminaries}
\label{sec:preliminaries}
A Boolean function in~$k$ variables is a function from the set~$\{\False, \True\}^k$ into the set~$\{\False, \True\}$.
The set of functions is denoted by~$\B_k$ and
its cardinality is~$2^{2^k}$.
Furthermore, for the rest of this paper, we choose an ordering of the sequence of variables that corresponds to~$x_1, x_2, \dots, x_k$. 
Any other ordering could be chosen, but one must be fixed.

\subsection{Boolean functions representation}
\label{sec:reprentatives}

Figure~\ref{fig:example} shows a decision tree representing a 4-variable Boolean function and its associated \textsc{robdd}. In both structures, traversing a path from the root to a leaf allows to evaluate the function for a given assignment.
Being in a node labelled by~$x_i$ and going to its low child (using the dotted edge) corresponds with evaluating~$x_i$ to \False; going to its high child (using the solid edge) corresponds with evaluating~$x_i$ to \True.

\begin{figure}[htb]
	\centering{\includegraphics[height=3.5cm]{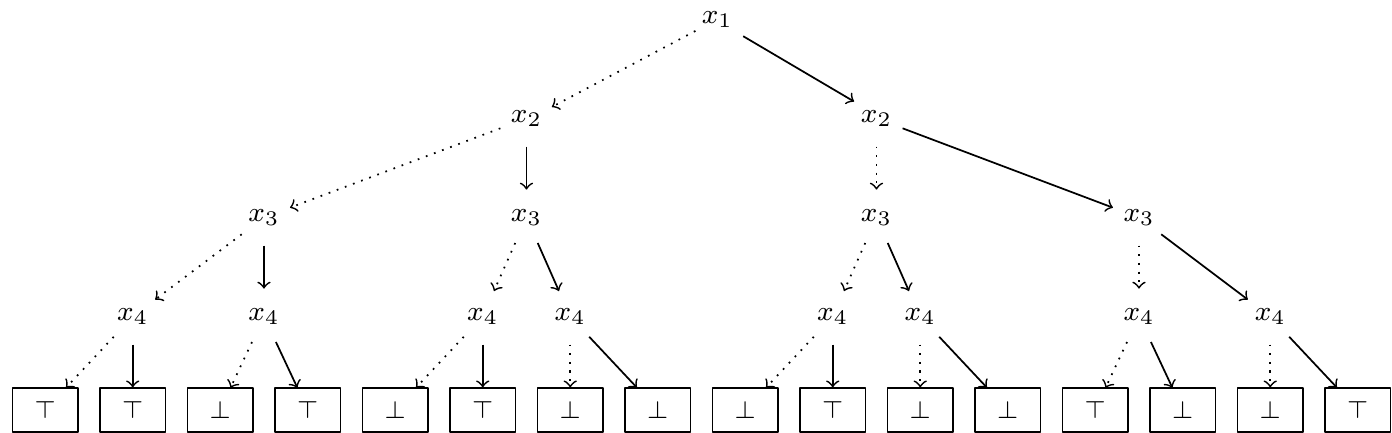}\hfill
		\includegraphics[height=3.5cm]{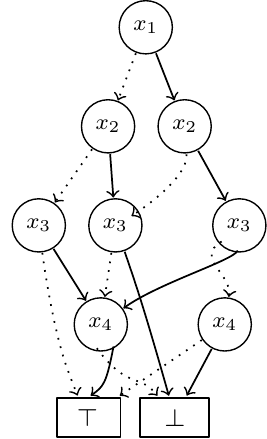}}
	\caption{\label{fig:example} \textbf{(left)} A decision tree and \textbf{(right)} its associated \textsc{robdd}.}
\end{figure}

\begin{definition}[\textsc{dag} representation]
	Let~$f\in \B_k$ be a Boolean function in variables~$x_1, \dots, x_k$.	
	The function~$f$ can be represented as a rooted \emph{directed acyclic graph} (\textsc{dag} for short) composed of internal \emph{decision nodes}, labelled by variables
	and two \emph{terminal nodes} labelled by~$ \{\Fnode, \Tnode\}$ representing respectively the constants~$\False$ and~$\True$.
	Each decision node labelled by~$x_i$ has two children, the low child (resp. high child) such that traversing the edge to
	the low child (resp. high child) corresponds to assign~$x_i$ to~$\False$ (resp. to~$\True$).
	The \emph{size} of a \textsc{dag} is its number of decision nodes. 
  \lipicsEnd
\end{definition}

\begin{definition}[\textsc{obdd}]
	Let~$f\in \B_k$ and pick one of its \textsc{dag} representation.
	The \textsc{dag} is called an Ordered Binary Decision Diagram for~$f$ (\textsc{obdd} for short) when all paths from the root to a terminal node traverse decision nodes with index in increasing order.
  \lipicsEnd
\end{definition}

By taking an \textsc{obdd} for a function with a pointed decision node~$\nu$ labelled by~$x_i$, and extracting the sub-\textsc{dag} rooted in~$\nu$ by taking all its descendants, we obtain an \textsc{obdd} representing a Boolean function in the variables~$x_i, x_{i+1}, \dots, x_k$.

\begin{definition}[\textsc{robdd}]
	Let~$f\in \B_k$ and let $B$ be one of its \textsc{obdd}.
	If all sub-\textsc{dag}s of~$B$ are representing distinct Boolean functions, then~$B$
	is called Reduced Ordered Binary Decision Diagram (\textsc{robdd}).
  \lipicsEnd
\end{definition}

Figure~\ref{fig:merge-delete} shows the forbidden configurations in \textsc{robdd} and the operations used to compress the structure. 

\begin{figure}[htb]
\centering{\includegraphics[height=3cm]{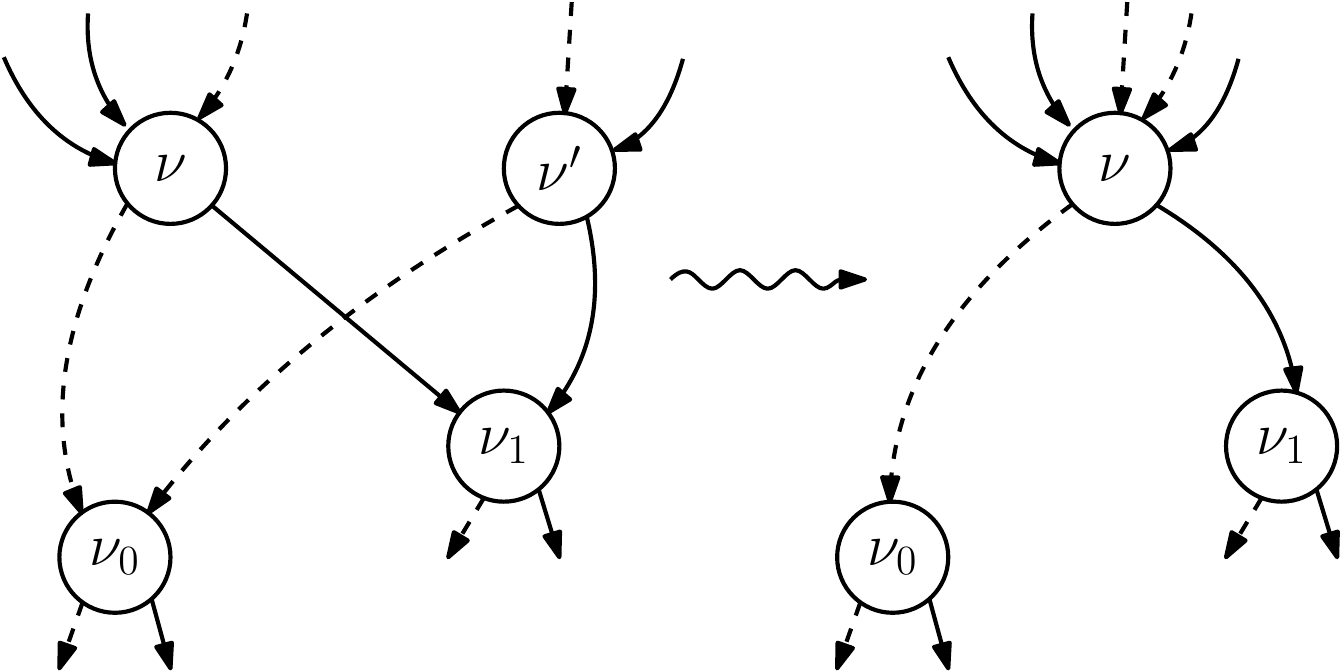}\hspace{2.5cm}
  \includegraphics[height=3cm]{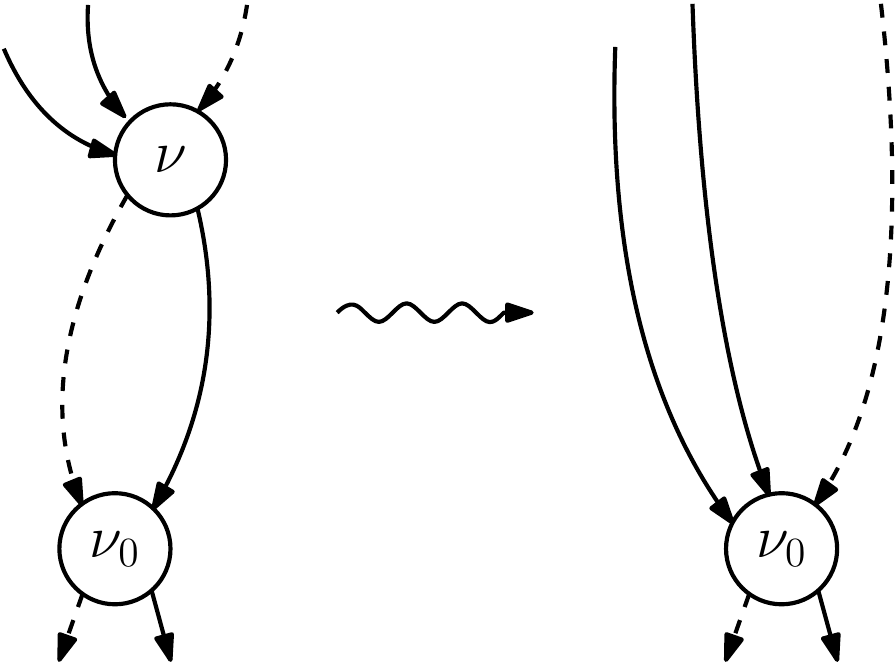}}
  \caption{\label{fig:merge-delete} The two forbidden configurations in \textsc{robdd}s and the resulting operations when compressing: \textbf{(left)}~$\nu$ and~$\nu'$ are merged; \textbf{(right)}~$\nu$ is deleted (from \cite{GPS04}).}
\end{figure}

\begin{fact*}
	Let~$f\in \B_k$, there exists a single \textsc{robdd} representing~$f$.
	\lipicsEnd
\end{fact*}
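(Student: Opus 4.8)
The plan is to prove existence and uniqueness separately, the latter by strong induction on the number of variables, using Shannon cofactors.

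\textbf{Existence.} The complete binary decision tree of $f$ of depth~$k$ is, by definition, an~$\OBDD$ for~$f$. Starting from it, I would repeatedly apply the two compression operations of Figure~\ref{fig:merge-delete}: \emph{merging} two decision nodes carrying the same label and having the same low child and the same high child (redirecting the in-edges of one of them to the other), and \emph{deleting} a decision node whose low and high children coincide (redirecting its in-edges to that common child). Each operation strictly decreases the number of nodes, so the process terminates, and each operation preserves both the represented function and the ordering of labels along root-to-terminal paths. To check that a structure to which neither operation applies is an~$\BDD$, one processes the levels from the terminals upward: after cleaning level~$i$ (first all possible deletions, then all possible merges), an induction on the level shows that any two sub-\DAGs rooted at level~$\ge i$ represent distinct Boolean functions. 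Hence the final object is a reduced $\OBDD$ representing~$f$.

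\textbf{Uniqueness.} Proceed by strong induction on~$k$. For~$k=0$ the only Boolean functions are the constants, whose only $\OBDD$s are the terminals~$\Fnode$ and~$\Tnode$, so the representative is unique. For the inductive step, let~$B$ be any~$\BDD$ for~$f\in\B_k$. First, the root of~$B$ is a terminal node if and only if~$f$ is constant: were~$f$ constant with a decision node labelled~$x_j$ at the root, its low child and high child would both represent~$f|_{x_j=\False}=f|_{x_j=\True}$, which is one of the forbidden configurations (the node is redundant if the two children coincide, otherwise two distinct sub-\DAGs represent the same function). So assume~$f$ is non-constant and let~$i$ be the smallest index with~$f|_{x_i=\False}\ne f|_{x_i=\True}$. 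Any variable not appearing in~$B$ is inessential for~$f$, so if the root is labelled~$x_j$ then~$j\le i$; and~$j<i$ is excluded by the same forbidden-configuration argument applied at~$x_j$. Thus the root is labelled~$x_i$. Its low (resp. high) child roots a sub-\DAG which, being a sub-\DAG of a reduced~$\OBDD$, is itself an~$\BDD$, and it represents~$f|_{x_i=\False}$ (resp.~$f|_{x_i=\True}$), a function in the variables~$x_{i+1},\dots,x_k$, hence on at most~$k-1$ variables. By the induction hypothesis these two sub-\BDDs are uniquely determined. Finally, the reduced property forces every node of~$B$ representing a given function~$g$ to be the root of \emph{the}~$\BDD$ of~$g$, so nodes shared between the low side and the high side are forced as well; therefore~$B$ is uniquely determined.

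\textbf{Main obstacle.} The delicate point is the bottom-up argument in the existence part: one must verify that cleaning a level creates no new coincidence among the functions represented at already-processed levels — deletions merely redirect pointers, and merges only identify nodes that already represent equal functions — so that the invariant ``distinct sub-\DAGs at processed levels represent distinct functions'' is genuinely maintained all the way down to the root. A secondary subtlety in the uniqueness step is the bookkeeping for cofactors: $f|_{x_i=\False}$ must be regarded as a function that does not depend on~$x_1,\dots,x_i$ so that the induction hypothesis applies, and one must check that gluing the two cofactor diagrams along their common nodes is consistent (the shared nodes being exactly those forced by uniqueness) rather than over-constraining the structure.
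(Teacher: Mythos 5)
The paper does not actually prove this fact: it states it as a classical property of \BDDs\ and refers the reader to Knuth~\cite{Knuth11} for the proof of uniqueness. So your proposal is not competing with an in-paper argument but with the literature, and what you give is essentially the classical canonicity proof: existence by exhaustively applying the two reduction rules of Figure~\ref{fig:merge-delete} to the complete decision tree of~$f$, and uniqueness by induction on the number of variables via Shannon cofactors, with sharing between the low and high sub-\DAGs\ forced by the reducedness condition. I see no gap in substance, only two places worth tightening if you write it out in full. First, in the existence part, separate the procedure from the structural lemma: what you really need is that \emph{any} \OBDD\ containing neither forbidden configuration has pairwise distinct sub-\DAG\ functions, proved by induction on levels from the terminals upward, independently of the order in which the reductions were applied; your phrasing mixes this invariant with a particular bottom-up cleaning schedule, whereas termination in arbitrary order already follows from the strict decrease of the node count. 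Second, in the uniqueness step, make explicit that ``uniquely determined'' means up to isomorphism of labelled \DAGs, and that the node set of~$B$ is the root together with the nodes of the two cofactor \BDDs\ (unique by the induction hypothesis), identified exactly when they represent the same function --- an identification that is forced because distinct sub-\DAGs\ of~$B$ must represent distinct functions. With those clarifications your argument matches the standard proof found in the cited reference.
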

The data structure of \textsc{robdd}s is especially famous due to this property of 
canonicity\footnote{Uniqueness of the \textsc{robdd} for~$f$ is ensured when a variable ordering is fixed.}.
The reader may read Knuth~\cite{Knuth11} for the proof of uniqueness and several other properties
satisfied by \textsc{robdd}s.
	
For the rest of the paper we are only dealing with \textsc{robdd}s.
Furthermore, we take the point of view of layered \textsc{robdd} by studying and representing
\textsc{robdd}s layer by layer: each layer contains all decision 
nodes labelled by the same variable. This layer-by-layer decomposition is natural since the variables
are ordered. Thus, a \textsc{robdd} for a function in~$k$ variables
is composed of~$k$ layers (plus a layer for the two terminal nodes). Note that a layer could be
empty (which means the Boolean function does not depend on the particular variable associated to this layer).

\begin{definition}[\textsc{robdd}'s profile]
Let~$f\in \B_k$ and let $B$ be its \textsc{robdd}.
Using the layer-by-layer point of view, we define the profile~$B$ to be the non-negative integer sequence~$[p_1, p_2, \dots, p_k]$ such that~$p_i$ is the number of nodes labelled by~$x_i$ in~$B$, i.e. the size of the layer corresponding to~$x_i$, for all~$i\in\{1, \dots, k\}$.
\lipicsEnd
\end{definition}

\subsection{Combinatorial description}

In our previous paper~\cite{CG20} we proposed a kind of recursive decomposition of a \textsc{robdd} based on the low and high children of the root. 
Here we propose a new and simpler point of view, based on a layer-by-layer description, which is much more efficient
for the counting problem (also for the generating problem). 
We need to introduce a generalization of a \textsc{robdd}, called multientry \textsc{robdd}s, which corresponds exactly to the structure obtained by removing some upper layers in a standard \textsc{robdd} (see Figure~\ref{fig:multi} for an example).

Informally a multientry \textsc{robdd} is a structure obtained by cutting off a certain number of the top layers of a \textsc{robdd}. The resulting structure is still a \textsc{dag}, but with several sources. We also keep track of where the half-edges from the top layers were pointing. In Figure~\ref{fig:multi} a \textsc{robdd} and multientry \textsc{robdd} obtained by removing the~$3$ top layers are depicted.

\begin{definition}[Multientry \BDD]
\label{def:multi}
	A multientry \BDD\ $M$ with at most~$k\ge 0$ variables and size~$n\ge 0$ is a couple~$(M, \mathfrak{E})$. 
	The structure~$M$ is a layered \textsc{dag} structure with~$n$ decision nodes~$\mathcal{Q}$ distributed on~$k$ layers
	and two constant nodes~$\{\top, \bot\}$ such that any two subgraphs are non-identical (as in \textsc{robdd}s). 
	The multiset~$\mathfrak{E}$ has its elements in~$\mathcal{Q}\cup \{\top, \bot\}$ and is such that 
	any source node in~$M$ (i.e., having in-degree~$0$), must appear at least once.
	The nodes in~$\mathfrak{E}$ are called \emph{distinguished} (and correspond with destination nodes of red half-edges in Figure~\ref{fig:multi}).
\lipicsEnd
\end{definition}
As a special case, a \textsc{robdd} is a multientry \textsc{robdd} having one source (the root) and a multiset~$\mathfrak{E}$ reduced to the root (with multiplicity~$1$).

\begin{figure}[htb]
\begingroup
	\scalebox{0.6}{\input{figs/fig-black-6_15_2289425208767.tikz}}
	\scalebox{0.6}{\begin{tikzpicture}[>=latex,line join=bevel,]
  \pgfsetlinewidth{1bp}

\pgfsetstrokecolor{black!35}
\begin{scope}
 \draw (121.0bp,291.0bp) ellipse (9.0bp and 9.0bp);
  \draw (121.0bp,291.0bp) node[font=\large,color=black!35] {$x_1$};
\end{scope}
\begin{scope}
  \draw (156.0bp,244.0bp) ellipse (9.0bp and 9.0bp);
  \draw (156.0bp,244.0bp) node[font=\large, color=black!35] {$x_2$};
\end{scope}
\begin{scope}
  \draw (163.0bp,197.0bp) ellipse (9.0bp and 9.0bp);
  \draw (163.0bp,197.0bp) node[color=black!35, font=\large] {$x_3$};
\end{scope}
\begin{scope}
  \draw (114.0bp,244.0bp) ellipse (9.0bp and 9.0bp);
  \draw (114.0bp,244.0bp) node[font=\large, color=black!35] {$x_2$};
\end{scope}
\begin{scope}
  \draw (121.0bp,197.0bp) ellipse (9.0bp and 9.0bp);
  \draw (121.0bp,197.0bp) node[font=\large, color=black!35] {$x_3$};
\end{scope}
\begin{scope}
  \draw (83.0bp,197.0bp) ellipse (9.0bp and 9.0bp);
  \draw (83.0bp,197.0bp) node[font=\large, color=black!35] {$x_3$};
\end{scope}

  \pgfsetcolor{black!35}
   \draw [->,dotted] (114.0bp,283.0bp) .. controls (108.12bp,277.12bp) and (108.03bp,267.82bp)  .. (111.22bp,252.61bp);
   \draw [->] (128.0bp,283.0bp) .. controls (135.36bp,275.64bp) and (142.32bp,266.22bp)  .. (151.62bp,252.22bp);
   \draw [->,dotted] (107.0bp,236.0bp) .. controls (100.13bp,229.13bp) and (94.267bp,219.96bp)  .. (86.417bp,205.39bp);
   \draw [->] (121.0bp,236.0bp) .. controls (126.88bp,230.12bp) and (126.97bp,220.82bp)  .. (123.78bp,205.61bp);
   \draw [->,dotted] (149.0bp,236.0bp) .. controls (142.16bp,229.16bp) and (146.71bp,218.71bp)  .. (157.19bp,204.25bp);

\pgfsetcolor{red!35}
\pgfsetcolor{red}
\draw [->]  (192.8bp,169.04bp) --  (202.8bp,159.04bp);
\pgfsetcolor{red!35}
\draw [->] (163.0bp,236.0bp) -- (203.8bp,175bp);
\pgfsetcolor{red}
\draw [->] (187.66bp,163.45bp) -- (197.66bp,153.45bp);
\pgfsetcolor{red!35}
\draw [->,dotted] (156.0bp,189.0bp) -- (149.66bp,175bp);
\pgfsetcolor{red}
\draw [->] (76.0bp,118.0bp) -- (86.988bp,108.2bp);
\pgfsetcolor{red!35}
\draw [->,dotted] (76.0bp,189.0bp) -- (69.0bp,175.0bp);
\pgfsetcolor{red}
\draw [->]  (175.23bp,117.6bp) -- (185.23bp,107.6bp);
\pgfsetcolor{red!35}
\draw [->] (90.0bp,189.0bp) -- (97.0bp,175.0bp); 
\pgfsetcolor{red}
\draw [->] (82.438bp,121.88bp) -- (92.438bp,111.88bp);
\pgfsetcolor{red!35}
\draw [->] (128.0bp,189.0bp) -- (135.438bp,175.0bp);
\pgfsetcolor{red}
\draw [->] (220.35bp,164.97bp) -- (230.35bp,154.97bp);
\pgfsetcolor{red!35}
\draw [->] (170.0bp,189.0bp) -- (177.35bp,175bp);
\pgfsetcolor{red}
\draw [->] (129.21bp,164.93bp) -- (139.21bp,154.93bp);
\pgfsetcolor{red!35}
\draw [->,dotted] (114.0bp,189.0bp) -- (107.21bp,175.0bp);

\pgfsetcolor{black}
\draw[dashed]  (53.0bp,172.0bp) -- (300bp,172bp);

\pgfsetcolor{black}

  \draw [->] (211.0bp,48.0bp) .. controls (221.14bp,37.857bp) and (234.0bp,27.977bp)  .. (250.4bp,16.359bp);
  \draw [->] (200.0bp,95.0bp) .. controls (205.72bp,89.282bp) and (206.9bp,80.393bp)  .. (205.75bp,64.9bp);
  \draw [->,dotted] (186.0bp,95.0bp) .. controls (170.9bp,79.902bp) and (176.5bp,64.989bp)  .. (188.0bp,47.0bp) .. controls (200.23bp,27.87bp) and (225.7bp,18.149bp)  .. (250.47bp,11.733bp);

  \draw [->,dotted] (88.0bp,95.0bp) .. controls (59.511bp,66.511bp) and (122.82bp,32.008bp)  .. (161.36bp,14.103bp);
  \draw [->] (102.0bp,95.0bp) .. controls (125.58bp,71.421bp) and (165.11bp,62.293bp)  .. (195.04bp,57.858bp);
  \draw [->,dotted] (197.0bp,48.0bp) .. controls (190.22bp,41.217bp) and (184.08bp,32.444bp)  .. (175.41bp,18.208bp);
 \draw [->,dotted] (140.0bp,142.0bp) .. controls (128.06bp,130.06bp) and (140.0bp,120.89bp)  .. (140.0bp,104.0bp) .. controls (140.0bp,104.0bp) and (140.0bp,104.0bp)  .. (140.0bp,55.0bp) .. controls (140.0bp,42.904bp) and (147.69bp,31.695bp)  .. (161.19bp,17.835bp);
  \draw [->] (154.0bp,142.0bp) .. controls (179.69bp,116.31bp) and (223.55bp,107.96bp)  .. (254.93bp,104.53bp);
  \draw [->,dotted] (257.0bp,95.0bp) .. controls (237.95bp,75.954bp) and (239.14bp,65.951bp)  .. (220.0bp,47.0bp) .. controls (209.91bp,37.006bp) and (197.04bp,27.392bp)  .. (180.62bp,16.135bp);
  \draw [->] (271.0bp,95.0bp) .. controls (289.3bp,76.701bp) and (277.66bp,44.036bp)  .. (264.79bp,18.161bp);
  \draw [->] (213.0bp,142.0bp) .. controls (248.68bp,106.32bp) and (320.0bp,154.47bp)  .. (320.0bp,104.0bp) .. controls (320.0bp,104.0bp) and (320.0bp,104.0bp)  .. (320.0bp,55.0bp) .. controls (320.0bp,33.433bp) and (294.88bp,20.936bp)  .. (269.64bp,12.42bp);
  \draw [->,dotted] (199.0bp,142.0bp) .. controls (193.31bp,136.31bp) and (191.65bp,127.57bp)  .. (191.77bp,112.19bp);
  \draw [->,dotted] (231.0bp,142.0bp) .. controls (210.72bp,121.72bp) and (227.54bp,105.68bp)  .. (217.0bp,79.0bp) .. controls (215.89bp,76.194bp) and (214.45bp,73.327bp)  .. (208.94bp,64.039bp);
  \draw [->] (245.0bp,142.0bp) .. controls (258.96bp,128.04bp) and (228.93bp,115.26bp)  .. (201.65bp,106.39bp);
\begin{scope}
  \definecolor{strokecol}{rgb}{0.0,0.0,0.0};
  \pgfsetstrokecolor{strokecol}
  \draw (95.0bp,103.0bp) ellipse (9.0bp and 9.0bp);
  \draw (95.0bp,103.0bp) node[font=\large] {$x_5$};
\end{scope}
\begin{scope}
  \definecolor{strokecol}{rgb}{0.0,0.0,0.0};
  \pgfsetstrokecolor{strokecol}
  \draw (193.0bp,103.0bp) ellipse (9.0bp and 9.0bp);
  \draw (193.0bp,103.0bp) node[font=\large] {$x_5$};
\end{scope}
\begin{scope}
  \definecolor{strokecol}{rgb}{0.0,0.0,0.0};
  \pgfsetstrokecolor{strokecol}
  \draw (180.5bp,18.0bp) -- (161.5bp,18.0bp) -- (161.5bp,0.0bp) -- (180.5bp,0.0bp) -- cycle;
  \draw (171.0bp,9.0bp) node[font=\large] {$\bot$};
\end{scope}
\begin{scope}
  \definecolor{strokecol}{rgb}{0.0,0.0,0.0};
  \pgfsetstrokecolor{strokecol}
  \draw (204.0bp,56.0bp) ellipse (9.0bp and 9.0bp);
  \draw (204.0bp,56.0bp) node[font=\large] {$x_6$};
\end{scope}
\begin{scope}
  \definecolor{strokecol}{rgb}{0.0,0.0,0.0};
  \pgfsetstrokecolor{strokecol}
  \draw (269.5bp,18.0bp) -- (250.5bp,18.0bp) -- (250.5bp,0.0bp) -- (269.5bp,0.0bp) -- cycle;
  \draw (260.0bp,9.0bp) node[font=\large] {$\top$};
\end{scope}
\begin{scope}
  \definecolor{strokecol}{rgb}{0.0,0.0,0.0};
  \pgfsetstrokecolor{strokecol}
  \draw (147.0bp,150.0bp) ellipse (9.0bp and 9.0bp);
  \draw (147.0bp,150.0bp) node[font=\large] {$x_4$};
\end{scope}
\begin{scope}
  \definecolor{strokecol}{rgb}{0.0,0.0,0.0};
  \pgfsetstrokecolor{strokecol}
  \draw (264.0bp,103.0bp) ellipse (9.0bp and 9.0bp);
  \draw (264.0bp,103.0bp) node[font=\large] {$x_5$};
\end{scope}
\begin{scope}
  \definecolor{strokecol}{rgb}{0.0,0.0,0.0};
  \pgfsetstrokecolor{strokecol}
  \draw (206.0bp,150.0bp) ellipse (9.0bp and 9.0bp);
  \draw (206.0bp,150.0bp) node[font=\large] {$x_4$};
\end{scope}
\begin{scope}
  \definecolor{strokecol}{rgb}{0.0,0.0,0.0};
  \pgfsetstrokecolor{strokecol}
  \draw (238.0bp,150.0bp) ellipse (9.0bp and 9.0bp);
  \draw (238.0bp,150.0bp) node[font=\large] {$x_4$};
\end{scope}
\end{tikzpicture}}
\endgroup
	\caption{\label{fig:multi}\small A \textsc{robdd} of size~$13$ (not counting the constant nodes~$\{\Fnode$ and~$\Tnode\}$),
		with~$6$ variables and profile~$\boldsymbol{p}=\SEQ{1, 2, 3, 3, 3, 1}$. \textbf{(left)} A \textsc{dag} representation of the \textsc{robdd};
		\textbf{(right)} Cutting off top three layers, we obtain the multientry \textsc{robdd} 
	  	keeping track with red half edges of nodes which were disconnected.}
\end{figure}
In the multientry \BDD\ in Figure~\ref{fig:multi}, by numbering the nodes from top to bottom and from left to right,
i.e. the leftmost~$x_4$ is number 1, the second~$x_4$ is 2, the rightmost~$x_4$ is 3, the rightmost~$x_5$ is 4, \dots,
$x_6$ is 7 and~$\bot$ and~$\top$ are respectively 8 and 9, we obtain the multiset 
$\mathfrak{E} = \{1, 2, 2, 3, 4, 4, 5\}$.
We note that our definition of multientry \BDD\ is similar (but not exactly identical) to the one of
\emph{shared-}\textsc{BDD}s presented by Knuth \cite{Knuth11} to represent several Boolean functions in the same decision diagram.

Furthermore, remark that the same multientry \BDD\ can be exhibitted by cutting off top layers (not even the same number) of different \textsc{robdd}s.

\section{Full iterative counting formula}
\label{sec:counting}

In the following we describe an approach to counting the number of \textsc{robdd}s
with a given profile $\boldsymbol{p}$. 
We use a powerful algebraic representation to encapsulate a kind
of inclusion-exclusion principle. This motivates the definition of a linear application on polynomials using substitutions. 
The linearity property of the applications is crucial for achieving our algorithm polynomial complexity.  We consider the polynomial ring~$\mathbb{Z}[X]$ and linear endomorphisms over~$\mathbb{Z}[X]$ (i.e., linear maps between~$\mathbb{Z}[X]$ and~$\mathbb{Z}[X]$). Thus, for a linear map~$g:  \mathbb{Z}[X] \to  \mathbb{Z}[X]$ and  two polynomials~$P$ and~$Q$ in~$\mathbb{Z}[X]$,~$g[P+Q]=g[P]+g[Q]$, and for any scalar~$\lambda \in \mathbb{Z}$ and polynomial~$P \in \mathbb{Z}[X]$
we have~$g[\lambda P] = \lambda g[P]$.

In the following theorem, we state the main result of the paper which gives access 
to the number of multientry \textsc{robdd}s for a given profile.
\begin{theorem}[Multientry \textsc{robdd}s counting formula]
\label{thm:main}
	For the family of linear maps~$(\phi_r)_{r\ge 0}$,
	each mapping~$\phi_r : \mathbb{Z}[X] \to \mathbb{Z}[X]$ is defined with respect to the canonical basis~$(X^m)_{m\ge 0}$ by 
	\begin{equation}
	  \label{eq:Rrm}
	  \phi_{r}[X^m] = \left(\prod_{i=0}^{r-1} \left(X^2-X-i\right)\right) \cdot
	  \left(\sum_{j=0}^{m-r} \binom{m}{j} \stirlingII{m-j}{r} \, X^j \right) .
	\end{equation}
	Let~$M(\boldsymbol{p}, m)$ be the number of multientry
	\textsc{robdd}s with profile~$\boldsymbol{p}=\SEQ{p_1, \dots, p_k}$ 
	and~$m$ incoming half edges.
	We have, for~$k\ge 0$,
	\begin{equation}
	  \label{eq:subs}
	  M(\boldsymbol{p}, m) = \left(\phi_{\boldsymbol{p}}[X^m]\right)_{X=2},
	\end{equation}
	where
	\begin{itemize}
	\item
	 $\phi_{\boldsymbol{p}}$ is the composition product~$\phi_{p_k} \circ \phi_{p_{k-1}} \circ \dots \circ \phi_{p_{1}}$;
	\item
	  for a polynomial~$P \in \mathbb{Z}[X]$,~$(P)_{X=2}$ is the evaluation of~$P$ at~$X=2$.
	\end{itemize}
\lipicsEnd
\end{theorem}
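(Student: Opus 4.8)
The plan is to prove the formula~\eqref{eq:subs} by induction on the number of layers $k$, peeling off at each step the \emph{top} layer, the one labelled by $x_1$. The base case $k=0$ is immediate: then $\boldsymbol p$ is the empty profile, the composition $\phi_{\boldsymbol p}$ is the identity map, and $(X^m)_{X=2}=2^m$; on the combinatorial side a multientry \BDD\ over no variables reduces to the two terminals $\{\Fnode,\Tnode\}$ equipped with $m$ incoming half-edges, each of which independently selects one of the two terminals, so there are indeed $2^m$ of them.

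For the inductive step, note that removing the $p_1$ nodes of the top layer of a multientry \BDD\ with profile $\SEQ{p_1,\dots,p_k}$ and $m$ incoming half-edges produces a multientry \BDD\ with profile $\SEQ{p_2,\dots,p_k}$: the half-edges that pointed into the deleted layer vanish, while the $2p_1$ edges issued by the deleted nodes become new half-edges. Conversely, one reconstructs the original object by supplying ``gluing data'' for a fresh top layer. Grouping these reconstructions by the number $c$ of half-edges of the truncated object, and using the linearity of $\phi_{p_2},\dots,\phi_{p_k}$ together with the induction hypothesis $M(\SEQ{p_2,\dots,p_k},c)=\bigl(\phi_{p_k}\circ\dots\circ\phi_{p_2}[X^c]\bigr)_{X=2}$, the statement reduces to the single identity $M(\SEQ{p_1,\dots,p_k},m)=\sum_{c}\alpha_c\,M(\SEQ{p_2,\dots,p_k},c)$, where $\alpha_c$ is the coefficient of $X^c$ in $\phi_{p_1}[X^m]$; in other words, ``inserting a top layer of $r=p_1$ nodes'' is represented, at the level of generating polynomials, by the linear map $\phi_r$.

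To see why $\phi_r$ as defined by~\eqref{eq:Rrm} is the correct transfer operator, one reads off its two factors. Among the $m$ incoming half-edges, a subset of size $j$ passes through the new layer untouched, contributing the monomial $X^j$ (to be resolved further down), and the remaining $m-j$ are distributed onto the $r$ new nodes so that each receives at least one --- a top-layer node being a source, it must be targeted by some half-edge. This yields $\binom mj$ for the choice of the pass-through subset and $\stirlingII{m-j}{r}$ for the partition of the rest into $r$ nonempty blocks, one per new node; it forces $j\le m-r$, precisely the range of the sum in~\eqref{eq:Rrm}, and it makes the $r$ new nodes pairwise distinguishable, each being identified with its nonempty set of incoming half-edges. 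Each new node then needs an ordered pair of two distinct children, and since the nodes of a \BDD\ layer represent pairwise distinct subgraphs, the assignment of ordered child-pairs to the new nodes must be injective; with the formal variable $X$ standing for the number of destinations available below, the number of such injections from the $r$ distinguished nodes is the falling factorial $\prod_{i=0}^{r-1}(X^2-X-i)$, the $-i$ accounting for the $i$ pairs already used by the previously treated nodes. The $2p_1$ children so created, together with the $j$ pass-through half-edges, constitute the half-edges of the truncated object --- the exponent of $X$ recording their number --- and the final substitution $X=2$ attaches each half-edge surviving to the bottom to one of the two terminals.

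The delicate point, and the reason the result must be phrased through \emph{linear} maps and their composition rather than as a plain product of independent counts, is that when a layer is inserted its children point into a part of the \DAG\ not yet built: at that moment one cannot decide whether two new nodes will become identified, nor whether a node will end up with a repeated child, because this depends on where the dangling half-edges eventually land. The reducedness constraint of \BDDs\ must therefore be enforced by an inclusion--exclusion whose accounting is \emph{deferred}, completed only once the outer maps $\phi_{p_2},\dots,\phi_{p_k}$ have acted and $X$ has been set to $2$. Showing that this deferred inclusion--exclusion collapses, layer by layer, to the closed form~\eqref{eq:Rrm} --- equivalently, that the $-i$ corrections interact correctly with the later substitutions under composition --- is the technical heart of the argument and the step I expect to be the main obstacle. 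A workable way to organise it is to first count the relaxed objects obtained by dropping the within-layer distinctness requirement: these satisfy a sign-free transfer identity, and the genuine count is recovered by M\"obius inversion over the partition lattice that records which new nodes coincide and which children repeat, the falling-factorial factor of $\phi_r$ being exactly the contribution of that inversion. Checking that this reproduces~\eqref{eq:Rrm} and composing all $k$ layers then closes the induction.
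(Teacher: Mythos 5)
Your overall architecture matches the paper's proof: induction on $k$ peeling off the top layer, the factor $\binom{m}{j}\stirlingII{m-j}{r}$ for splitting the $m$ incoming half-edges into $j$ pass-through edges and a surjective assignment onto the $r$ source nodes, and linearity plus the induction hypothesis to push the resulting coefficients through the remaining maps. The genuine gap is the step you yourself flag as the main obstacle: you never prove that the within-layer distinctness constraints are encoded by the factor $\prod_{i=0}^{r-1}(X^2-X-i)$. Your justification reads $X$ as ``the number of destinations available below'', but in this formalism $X$ does not count destinations; it is a formal marker for dangling half-edges, and whether two of the $2r$ new half-edges eventually land on the same node or on different nodes of the not-yet-built lower structure changes the count, so a falling-factorial argument on destinations is not available at that point. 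The paper closes exactly this hole by introducing $f^{(r)}_{\boldsymbol{p}}(j)$, the number of multientry \BDDs\ with profile $\boldsymbol{p}$, $j$ free half-edges and $r$ constrained pairs of half-edges (the two members of a pair distinct, the $r$ pairs pairwise distinct), and by establishing the one-pair-at-a-time inclusion--exclusion recurrence $f^{(r)}_{\boldsymbol{p}}(j)=f^{(r-1)}_{\boldsymbol{p}}(j+2)-f^{(r-1)}_{\boldsymbol{p}}(j+1)-(r-1)\,f^{(r-1)}_{\boldsymbol{p}}(j)$; solving it shows that the constraints act as the shift-operator polynomial $\prod_{i=0}^{r-1}(S^2-S-i)$ applied to $f^{(0)}_{\boldsymbol{p}}=M(\boldsymbol{p},\cdot)$, which becomes multiplication by $\prod_{i=0}^{r-1}(X^2-X-i)$ only after invoking the induction hypothesis and linearity.

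Your proposed repair --- count relaxed objects with the sign-free identity $\sum_j\binom{m}{j}\stirlingII{m-j}{r}M(\boldsymbol{p},j+2r)$ and recover the reduced count by M\"obius inversion over the lattice recording which new nodes share a child-pair and which pairs are degenerate --- is plausible, but it is precisely where the work lies and it is not carried out: the two kinds of coincidences interact (a degenerate pair shared by several nodes must not be corrected twice), and the inversion must be performed at the level of half-edge counts, since each identification of half-edges changes the argument of $M(\boldsymbol{p},\cdot)$, not merely a sign. Until that computation is done and shown to reproduce \eqref{eq:Rrm}, the induction step remains unproven; the paper's recurrence above is the simplest known way to carry it out.
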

In the theorem,~$\binom{m}{j}$ stands for the binomial coefficient
and~$\stirlingII{n}{k}$ is the Stirling number of the
second kind counting the number ways to partition a set of~$n$ objects into~$k$
non-empty subsets.

\begin{remark}
\label{rem:basic}
	This is actually a stronger result that what we need for counting \BDDs, 
	since the number of \BDDs\ corresponds to the special case where
	$\boldsymbol{p}=\SEQ{p_1=1, p_{2}, \dots, p_k}$ and 
	$m=1$ (meaning there is one source node in the top layer). 
	However, the fact that we are able to compute 
	$\phi_{\boldsymbol{p}}$ on the basis $(X^m)_{m\ge 0}$ is key to our approach. 
\end{remark}
The detailed the proof of Theorem~\ref{thm:main} is deferred to after an example
of such a computation. 
\begin{example*}
Let us consider a profile~$\boldsymbol{p}=\SEQ{1, 2, 4, 2}$ for~$4$ variables~$x_1, x_2, x_3, x_4$ and choosing~$m=1$ in \eqref{eq:subs}. Then, for~$1\le i \le 4$, we compute iteratively
$\phi_{p_i} \circ \dots \circ \phi_{p_1}(X)$:
\begin{footnotesize}
  \begin{align*}
  X & \xmapsto{\phi_1}  X^2-X\\
	&\quad \xmapsto{\phi_2} X^4 - 2X^3 + X\\
	&\qquad \xmapsto{\phi_4} X^8 - 4X^7 + 14X^5 - 6X^4 - 16X^3 + 5X^2 + 6X\\
	&\qquad  \quad \xmapsto{\phi_2} 28X^{10} + 28X^9 - 98X^8 - 112X^7 + 76X^6 + 92X^5 - 12X^4 - 8X^3 + 6X^2.
\end{align*}
\end{footnotesize}
Evaluating the last polynomial at~$X=2$, we get that there are~$11\,160$
\textsc{robdd}s with profile~$\SEQ{1, 2, 4, 2}$.
The power of our approach is that we could have stopped at any iteration, and still get the number of \textsc{robdd}s for the considered truncated profile. On the particular example this yields
\par
\noindent\begin{tabularx}{\linewidth}{l|l|l}
  \footnotesize
  $\boldsymbol{p}$ & $\phi_{\boldsymbol{p}}$ &  $\left(\phi_{\boldsymbol{p}}\right)_{X=2}$ \\
  \hline
  $\SEQ{\,}$ & $X$ & $2$\\
$\SEQ{1}$&   $X^2-X$ & $2$\\
$\SEQ{1, 2}$&   $X^4 - 2X^3 + X$ & $2$ \\
$\SEQ{1, 2, 4}$&   $X^8 - 4X^7 + 14X^5 - 6X^4 - 16X^3 + 5X^2 + 6X$ & $0$\\
$\SEQ{1, 2, 4, 2}$ & $28X^{10} + 28X^9 - 98X^8 - 112X^7 + 76X^6 + 92X^5 - 12X^4 - 8X^3 + 6X^2$ & $11\,160$\\
\end{tabularx}
\par\noindent
The number~$0$ when considering~$\SEQ{1, 2, 4}$ may seem counterintuitive at first, but indeed a~$\textsc{robdd}$ can only have up to~$2$ nodes on its last layer, otherwise one node has to be a duplicate of another. 
\end{example*}
\begin{proof}[Proof of Theorem~\ref{thm:main}](Multientry \textsc{robdd}s counting formula).
The proof is obtained by induction on the number~$k\ge 0$ of layers with decision nodes.

\proofsubparagraph{Base case.}
When~$k=0$ and~$n\ge 0$. The number of multientry \textsc{robdd}s is~$M(\SEQ{\,}, m) = 2^m$, i.e.,~$X^m$ evaluated at~$X=2$, as we must map the~$m$ half edges to either one of the two constants. Note that if~$m=0$ then~$M(\SEQ{\,}, m) = 1$ corresponding to the void function (which is a special case).

\proofsubparagraph{Induction step.}
Now suppose Theorem~\ref{thm:main} is true for~$k\ge 0$.

Let us consider a profile of length~$k+1$ as~$\SEQ{r} \cdot \boldsymbol{p}$ with 
$r\ge 0$ and~$\boldsymbol{p}$ a profile of length~$k$.
\begin{itemize}
\item
  If~$r=0$, a simple computation shows that~$\phi_{0}$ is the identity. Hence, the empty layer can in fact be omitted since
\begin{equation}
  \label{eq:recursiveM0}
  \phi_{\boldsymbol{p}} \circ \phi_{0}\left[X^{m}\right]= \phi_{\boldsymbol{p}} \left[X^{m}\right],\quad\text{so that $M(\SEQ{0} \cdot \boldsymbol{p}, m)=M(\boldsymbol{p}, m)$}.
\end{equation}
\item
From now on let us suppose~$0 <r \le m$.
The set of~$m$ half edges pointing at the first layer can be decomposed 
in two subsets for~$j\in\{0, \dots, m-r\}$:~$j$ entries will go to layers
below the first one, and~$m-j$ entries will be mapped to
the~$r$ nodes of the first layer. A Stirling number of the
second kind~$\stirlingII{n}{k}$ counts the number of ways to partition 
a set of~$n$ objects into~$k$ non-empty subsets. 
So there are~$\binom{m}{j} \cdot \stirlingII{m-j}{r}$ such partitions. 
We write
\begin{equation}
  \label{eq:decompositionM}
  M(\SEQ{r}\cdot\boldsymbol{p}, m) =
  \sum_{j=0}^{m-r} \binom{m}{j} \stirlingII{m-j}{r}
  f^{(r)}_{\boldsymbol{p}}(j),
\end{equation}
where~$f^{(r)}_{\boldsymbol{p}}(j)$ denotes the number of multientry \textsc{robdd}s with~$j$ free half edges and~$r$ pairs of half edges (the ones resulting from the~$r$ nodes of the first layer). These~$2r$ half edges must thus obey the following constraints:
\begin{itemize}
\item
  in each pair, the two half edges must be distinct, i.e., point to different nodes;
\item
  all~$r$ pairs of half edges must be distinct with one another (as pairs).
\end{itemize}
Our goal now is to get rid of the constraints coming from these~$r$ nodes and express all quantities in terms of free half edges.

The following equation translates the previous constraints on the pair of adjacent half edges coming from the first of the~$r$ nodes
\begin{align}
  \label{eq:recursiveM}
  f^{(r)}_{\boldsymbol{p}}(j) 
  = f^{(r-1)}_{\boldsymbol{p}}(j+2)  
   -f^{(r-1)}_{\boldsymbol{p}}(j+1)  
    - (r-1)  f^{(r-1)}_{\boldsymbol{p}}(j).
\end{align}
Indeed, the first term~$f^{(r-1)}_{\boldsymbol{p}}(j+2)$ corresponds in adding~$2$ free half edges, and results in overcounting. Then 
following an inclusion-exclusion principle, firstly we subtract 
$f^{(r-1)}_{\boldsymbol{p}}(j+1)$ 
which would count the number of configurations if the two half edges in this pair were merged. 
Finally, we subtract~$(r-1)  f^{(r-1)}_{\boldsymbol{p}}(j)$
in~\eqref{eq:recursiveM}. This last quantity counts the number of configurations if the pair of half edges was merged with one of the~$r-1$ remaining ones (hence~$r-1$ choices). Note that no additional free half edge is added because of this merge. This yields~\eqref{eq:recursiveM}.

Solving the simple recurrence~\eqref{eq:recursiveM} with respect to~$r$ yields
\begin{equation}
  \label{eq:induction}
   f^{(r)}_{\boldsymbol{p}}(j) =  \sum_{i=0}^{2r}  a_i f^{(0)}_{\boldsymbol{p}}(i+j),
\end{equation}
where coefficients~$(a_i)$ are obtained by identifying~$P(X) = \prod_{i=0}^{r-1}(X^2-X-i) = \sum_{i=0}^{2r} a_i X^i$.

At this point, we remark the equality true for~$m\ge 0$
\begin{equation}
  \label{eq:free}
  M(\boldsymbol{p}, m) = f^{(0)}_{\boldsymbol{p}}(m),
\end{equation}
which reflects the fact that in~$M(\boldsymbol{p}, m)$, all~$m$ half edges are unconstrained.
Then \eqref{eq:decompositionM} rewrites 
\begin{align*}
  M(\SEQ{r}\cdot\boldsymbol{p}, m)
  = \sum_{j=0}^{m-r} \binom{m}{j}  \stirlingII{m-j}{r} \sum_{i=0}^{2r} a_i M(\boldsymbol{p}, i+j)
   = \sum_{i=0}^{m+r} c_i M(\boldsymbol{p}, i),
\end{align*}
with coefficients~$c_i$ obtained by identifying~$\sum_{i=0}^{m+r} c_i X^i = \phi_r(X^m)$ from \eqref{eq:Rrm}.

By induction hypothesis on the length~$k$ of~$\boldsymbol{p}$ we have for~$0\le i\le m+r$
\begin{equation}
  \label{eq:induction_hypothesis}
  M(\boldsymbol{p}, i) =  \left(\phi_{\boldsymbol{p}}\left[X^{i}\right]\right)_{X=2}.
\end{equation}
By linearity
\begin{align*}
\sum_{i=0}^{m+r} c_i  \phi_{\boldsymbol{p}} \left[X^{i}\right] 
  = \phi_{\boldsymbol{p}} \left[\sum_{i=0}^{m+r} c_i X^i\right]
  = \phi_{\boldsymbol{p}} \left[\phi_r[X^m]\right]=\phi_{\boldsymbol{p}} \circ \phi_r[X^m],
\end{align*}
and finally
\begin{align*}
  M(\SEQ{r}\cdot\boldsymbol{p}, m)
   = \sum_{i=0}^{m+r} c_i M(\boldsymbol{p}, i)
    = \sum_{i=0}^{m+r} c_i  \left(\phi_{\boldsymbol{p}} \left[X^{i}\right]\right)_{X=2}
  = \left(\phi_{\boldsymbol{p}} \circ \phi_r[X^m]\right)_{X=2}
\end{align*}
This ends the proof.
\end{itemize}
\end{proof}

\section{Counting algorithms}
\label{sec:algos}

In this section, we present an algorithm for counting \textsc{robdd}s of size~$n$. 

\emph{The time and space complexities are measured respectively in terms of arithmetical operations on~$\mathbb{Z}$ and memory space used to store integers in~$\mathbb{Z}$.  When considering \textsc{robdd}s of size upper bounded by~$n$, all integers in~$\mathbb{Z}$ involved can be checked to be of bit length~$O(n\log n)=O(\log n!)$.}

The reader can find an implementation of the following algorithms at \url{https://github.com/agenitrini/BDDgen}.

\subsection{Linear maps: precomputation step}
A first  step is to pre-compute a representation of linear maps~$(\phi_r)_{r\ge 0}$. For a \textsc{robdd} of size~$n$ we know that the maximal number of half edges is~$n+1$, and the maximal number of nodes on a layer is also loosely upper bound by~$n$. Hence, it is sufficient to compute~$\phi_r[X^m]$ in~$\mathbb{Z}[X]$ for~$0\le r \le n$ and~$0 \le m \le n+1$. In the form of \eqref{eq:Rrm},~$\phi_{r}[X^m]$ is equal to~$P_r(X) Q_{r, m}(X)$ with
\[
  P_r(X) =
  \prod_{i=0}^{r-1} (X^2-X-i), \quad\text{ and } \quad Q_{r, m}(X)=\sum_{j=0}^{m-r} \binom{m}{j}\stirlingII{m-j}{r}X^j.
\]
So the first step is to compute coefficients of~$P_r(X)$ and~$Q_{r, m}(X)$.
Concerning binomial coefficient and Stirling numbers of the second kind, both tables can be computed by a naive algorithm (for binomials, it is the famous Pascal's triangle) in space~$O(n^2)$ with~$O(n^2)$ arithmetic operations on integers.

Once these coefficients are available, we compute the products~$\phi_{r}[X^m] = P_r(X) Q_{r, m}(X)$.
Computing~$P_r$ from~$P_{r-1}$ necessitates~$O(n)$ arithmetical operations on integers, yielding a total~$O(n^2)$ number of arithmetical operations for the whole family~$(P_r)_{r\le n}$. Each polynomial~$(Q_{r, m})$ necessitates~$O(n)$ arithmetical operations per polynomial (supposing binomial coefficients and Stirling number of the second kind are precomputed). Finally,~$\phi_r[X^m]$ is computed with~$O(n^2)$ arithmetical operations (by a naive product of two polynomials). There are~$O(n^2)$ such polynomials to compute.  Hence, we get a total~$O(n^4)$ of arithmetic operations using~$O(n^3)$ storage memory space for coefficients. We thus get the next lemma.
\begin{lemma}[Precomputing step: linear maps]
  \label{lem:linear}
  The precomputation step for the representation of linear maps by computing~$\phi_r[X^m]$ for~$0 \le r \le n$ and~$0\le m\le n+1$ necessitates~$O(n^4)$ arithmetical operations on integers and uses memory space for~$O(n^3)$ coefficients in~$\mathbb{Z}$.
  \lipicsEnd
\end{lemma}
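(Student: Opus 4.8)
The plan is to establish Lemma~\ref{lem:linear} by a straightforward accounting of the arithmetic operations and storage needed to build the table $(\phi_r[X^m])$ for $0\le r\le n$ and $0\le m\le n+1$, following the factored form of~\eqref{eq:Rrm}. First I would bound the degrees: each $\phi_r[X^m]$ has degree $m+r\le 2n+1$, so every polynomial we manipulate has $O(n)$ coefficients, and the whole table stores $O(n)\times O(n)=O(n^2)$ polynomials, i.e.\ $O(n^3)$ integer coefficients. Each coefficient, for \textsc{robdd}s of size at most $n$, has bit length $O(n\log n)$ as noted in the running assumption, but since complexity is measured in arithmetic operations on $\mathbb{Z}$ this does not enter the count.

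Next I would handle the three ingredients separately. For the binomial coefficients $\binom{m}{j}$ with $0\le j\le m\le n+1$ and the Stirling numbers $\stirlingII{a}{r}$ with $0\le r\le a\le n+1$, both triangular tables satisfy a Pascal-type recurrence and are filled in $O(n^2)$ arithmetic operations using $O(n^2)$ storage. For the family $(P_r)_{0\le r\le n}$, I would use $P_r(X)=P_{r-1}(X)\cdot(X^2-X-r+1)$: multiplying a degree-$O(n)$ polynomial by a fixed quadratic costs $O(n)$ operations, so all of $(P_r)$ costs $O(n^2)$ operations and $O(n^2)$ storage. For the polynomials $Q_{r,m}(X)=\sum_{j=0}^{m-r}\binom{m}{j}\stirlingII{m-j}{r}X^j$, once the two tables are available each is assembled coefficient-by-coefficient in $O(n)$ operations; there are $O(n^2)$ such polynomials, giving $O(n^3)$ operations and $O(n^3)$ storage for the whole family.

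Finally I would combine: $\phi_r[X^m]=P_r(X)\,Q_{r,m}(X)$ is a product of two polynomials of degree $O(n)$, computed naively in $O(n^2)$ operations. Summing over the $O(n^2)$ pairs $(r,m)$ gives $O(n^4)$ arithmetic operations, which dominates the $O(n^2)$ and $O(n^3)$ contributions from the earlier substeps. The storage is dominated by the output table itself, $O(n^3)$ coefficients in $\mathbb{Z}$. This yields exactly the claimed bounds.

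I do not expect a genuine obstacle here, since the argument is purely a resource tally; the only points requiring a little care are (i) making sure the degree bound $m+r\le 2n+1$ is used consistently so that no polynomial has more than $O(n)$ coefficients, and (ii) checking that the ranges $0\le r\le n$, $0\le m\le n+1$ are the right ones — which is justified by the observation that a \textsc{robdd} of size $n$ has at most $n+1$ half edges entering any layer and at most $n$ nodes on a layer. Everything else is routine polynomial arithmetic, and one could optionally remark that fast multiplication would improve the exponent, though the naive $O(n^2)$ product is enough for the stated $O(n^4)$ bound.
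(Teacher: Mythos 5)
Your proposal is correct and follows essentially the same route as the paper: the factorization $\phi_r[X^m]=P_r(X)\,Q_{r,m}(X)$, Pascal-type tables for $\binom{m}{j}$ and $\stirlingII{m-j}{r}$ in $O(n^2)$, incremental computation of $P_r$ from $P_{r-1}$, $O(n)$ assembly per $Q_{r,m}$, and a naive $O(n^2)$ product over the $O(n^2)$ pairs $(r,m)$, giving $O(n^4)$ operations and $O(n^3)$ stored coefficients. Your write-up is, if anything, slightly more explicit about the degree bound $m+r=O(n)$ and the justification of the ranges of $r$ and $m$, but the argument is the same.
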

  
\subsection{Basic counting}
The basic block in our approach is to be able to compute~$\phi_r[P]$ for
$r\ge 0$. This is done by Algorithm~\ref{algo:iterate} which is a direct translation of Theorem~\ref{thm:main} using the linearity of the linear maps~$(\phi_r)_r$.
\begin{figure}[htb]
	\begin{minipage}[t]{0.45\linewidth}
		\begin{algorithm}[H]
			\footnotesize
			\caption{\footnotesize Computing~$\phi_r[P]$}
			\label{algo:iterate}
			\KwIn{an integer~$r\ge 0$, a polynomial~$P(X)=\sum_{m=0}^{d} p_m X^m \in \mathbb{Z}[X]$ }
			\KwOut{$\phi_{r}[P] \in \mathbb{Z}[X]$}
			$Q \gets 0$\;
			\For{$m$ \From $r$ \KwTo $d$}{
				$Q \gets Q + p_m \phi_r[X^m]$\;
			}
			\Return{$Q$}
		\end{algorithm}
	\end{minipage}\hfill
	\begin{minipage}[t]{0.52\linewidth}
		\begin{algorithm}[H]
                  \footnotesize
                  \caption{Computing $(\phi_{\boldsymbol{p}}[X])_{X=2}$}
                  \KwIn{a profile~$\boldsymbol{p}=\SEQ{p_1, \dots, p_k}$}
                  \KwOut{the number of \textsc{robdd}s with profile~$\boldsymbol{p}$}
			\label{algo:count}
			$P \gets X$\;
			\For{$i$ \From $1$ \To $k$}{
                          $P \gets \phi_{p_i}[P]$
			}
			\Return{$P(2)$}
		\end{algorithm}
	\end{minipage}
\end{figure}

\begin{proposition}[Complexity of basic step]
\label{prop:basicStep}
	Let $P$ be a polynomial of degree~$d$. Algorithm~\ref{algo:iterate} computes~$\phi_r[P]$ and performs
	$O(rd +d^2)$ arithmetical operations over~$\mathbb{Z}$ to compute~$\phi_r[P]$, 
	using~$O(d^2)$ memory space\footnote{Recall in Proposition~\ref{prop:basicStep}
		 we do not take into account the precomputation step of the family~$(\phi_r[X^m])_{r}$.}.
	 \lipicsEnd 
\end{proposition}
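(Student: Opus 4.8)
The plan is to analyze Algorithm~\ref{algo:iterate} line by line, bounding both the number of arithmetic operations over $\mathbb{Z}$ and the memory footprint, under the standing assumption of Lemma~\ref{lem:linear} that the polynomials $\phi_r[X^m]$ are already available. First I would observe that the loop runs over $m$ from $r$ to $d$, so it executes at most $d-r+1 \le d+1$ iterations. In each iteration, the cost is that of computing the scalar multiple $p_m\,\phi_r[X^m]$ and adding it to the accumulator $Q$. By the closed form \eqref{eq:Rrm}, the polynomial $\phi_r[X^m] = P_r(X)\,Q_{r,m}(X)$ has degree $2r + (m-r) = m+r$; since $m \le d$, every polynomial manipulated in the loop has degree at most $d+r$, hence at most $d+r+1$ coefficients. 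Therefore the scalar multiplication $p_m\,\phi_r[X^m]$ costs $O(d+r)$ integer multiplications, and the addition $Q \gets Q + p_m\,\phi_r[X^m]$ costs $O(d+r)$ integer additions. Summing over the $O(d)$ iterations gives a total of $O(d(d+r)) = O(rd + d^2)$ arithmetical operations, which is the claimed bound.

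For the space bound, I would note that the only quantities the algorithm maintains are the input polynomial $P$ (of degree $d$, so $O(d)$ coefficients), the accumulator $Q$ (of degree at most $d+r$, so $O(d+r)$ coefficients), and transient access to the precomputed polynomials $\phi_r[X^m]$. Strictly this is $O(d+r)$ rather than $O(d^2)$; since $r$ can be as large as $n$ and $d$ as small as we like, to obtain the stated $O(d^2)$ one should either assume $r \le d$ in the regime of interest (which holds for the application, where a layer cannot have more nodes than half-edges entering it, i.e.\ $p_i \le m \le d$ whenever the layer is non-empty) or simply absorb $r$ into the looser bound. I would state this cleanly: under $r = O(d)$ the working space is $O(d)$, and in any case it is dominated by the $O(d^2)$ already charged to manipulating the intermediate polynomials if one counts naively; the footnote already clarifies that the precomputed table $(\phi_r[X^m])$ is not charged here.

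The main subtlety — not really an obstacle, but the point requiring care — is keeping the degree bookkeeping honest: one must verify that $\deg \phi_r[X^m] = m+r$ from \eqref{eq:Rrm} (the leading term of $P_r$ is $X^{2r}$ and of $Q_{r,m}$ is $\binom{m}{m-r}\stirlingII{r}{r}X^{m-r} = \binom{m}{r}X^{m-r}$, whose product has degree $m+r$ with nonzero coefficient), so that the accumulator never silently grows beyond degree $d+r$, and that the per-iteration cost is genuinely linear in the polynomial length rather than quadratic. Once this is pinned down, the operation count $O(rd+d^2)$ follows by the summation $\sum_{m=r}^{d} O(m+r) = O(d^2 + rd)$, completing the proof.
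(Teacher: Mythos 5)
Your argument matches the paper's own proof: both rest on the observation that each precomputed $\phi_r[X^m]$ has degree $m+r=O(d+r)$, so the $O(d)$ loop iterations each cost $O(d+r)$ coefficient operations, giving $O(rd+d^2)$. Your degree check for \eqref{eq:Rrm} and your remark that the working space is really only $O(d+r)$ (hence within the stated $O(d^2)$, which is tight only if one also charges the accessed slice of the precomputed table) are both sound and, if anything, more explicit than the paper's two-line proof.
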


\begin{proof}
  	Each polynomial~$\phi_r[X^m]$ is of degree~$r+m=O(r+d)$. 
	Thus, the number of operations needed on coefficients is~$O(rd+d^2)$ if~$r>0$
	(or~$O(1)$ if~$r=0$ since~$\phi_0$ is the identity).
      \end{proof}

By Remark~\ref{rem:basic} and applying Theorem~\ref{thm:main},
it is straightforward to compute the number of \textsc{robdd}s
with profile~$\boldsymbol{p}=\SEQ{p_1, \dots, p_k}$ 
(using~$m=1$).
The pseudocode is given in Algorithm~\ref{algo:count}. We have the following proposition.
\begin{proposition}
\label{prop:MComplexity}
	Algorithm~\ref{algo:count} computes the number of \textsc{robdd}s of size~$n$ with~$k$ variables
        and given profile~$\boldsymbol{p}$. It performs~$O(k\;n^2)$ arithmetical operations over~$\mathbb{Z}$ and uses~$O(n)$ extra memory space to store integers.
        \lipicsEnd
\end{proposition}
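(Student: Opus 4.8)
The plan is to bound the cost of Algorithm~\ref{algo:count} by accounting for each of its $k$ iterations separately and then summing. First I would observe that the polynomial $P$ maintained by the loop is exactly the partial composition $\phi_{p_i}\circ\dots\circ\phi_{p_1}[X]$, so by Theorem~\ref{thm:main} (applied to the truncated profile and $m=1$) its degree after iteration $i$ is bounded by the size of the \textsc{robdd} seen so far, namely $1+p_1+\dots+p_i$. In particular, writing $d_i$ for the degree of $P$ at the start of iteration $i$, we have $d_1=1$ and $d_i\le 1+p_1+\dots+p_{i-1}\le n$ for all $i$, since the total profile sums to $n$.

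The next step is to feed these degree bounds into Proposition~\ref{prop:basicStep}. Iteration $i$ calls Algorithm~\ref{algo:iterate} with parameter $r=p_i$ on a polynomial of degree $d_i$, costing $O(p_i d_i + d_i^2)$ arithmetical operations over $\mathbb{Z}$. Since $d_i\le n$, each term $p_i d_i$ is $O(p_i n)$ and each $d_i^2$ is $O(n^2)$. Summing over $i=1,\dots,k$ gives $\sum_i O(p_i n) + \sum_i O(n^2) = O(n\sum_i p_i) + O(k n^2) = O(n^2) + O(k n^2) = O(k n^2)$ operations, using $\sum_i p_i = n$. The final evaluation $P(2)$ on a polynomial of degree at most $n$ costs $O(n)$ more operations by Horner's rule, which is absorbed. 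For the space bound, only the current polynomial $P$ (of degree at most $n$, hence $O(n)$ coefficients) plus the output of one call to Algorithm~\ref{algo:iterate} need to be held at once; by the footnote convention, the precomputed table $(\phi_r[X^m])$ is not charged here, so the extra memory is $O(n)$ integers.

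The main subtlety — not really an obstacle, but the point that needs care — is justifying that the degree of the running polynomial never exceeds $n$. This is where I would lean on Theorem~\ref{thm:main} rather than on a direct inspection of $\phi_r$: a priori $\phi_r[X^m]$ has degree $m+r$, which could in principle make the degree grow by $p_i$ at every step and blow up to $2n$ or worse if one is careless about which intermediate polynomial is being tracked. The resolution is that the coefficients of $\phi_{\boldsymbol p}[X]$ above degree $n$ must all vanish, because they would correspond to counts $M(\boldsymbol p', i)$ of multientry \textsc{robdd}s with more distinguished nodes than exist, and more fundamentally because a \textsc{robdd} with profile summing to more than $n$ on $n$ nodes is impossible (the same phenomenon illustrated by the $0$ entry for $\SEQ{1,2,4}$ in the worked example). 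Once this degree-stability is in hand, the arithmetic above is routine and the proposition follows.
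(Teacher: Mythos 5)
Your route is the same as the paper's: iterate the $k$ applications of $\phi_{p_i}$, charge each one via Proposition~\ref{prop:basicStep}, finish with a Horner evaluation at $X=2$, and note that only the running polynomial (plus the output of one call to Algorithm~\ref{algo:iterate}) is kept, the precomputed table being charged to the precomputation step, so the extra storage is $O(n)$ integers. Your per-iteration accounting $O(p_i d_i + d_i^2)$ combined with $\sum_i p_i=n$ is in fact slightly sharper than the paper's uniform $O(n^2)$ per layer; both yield $O(k\,n^2)$.

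The one point that needs repair is your justification of the degree bound, which is the only non-routine step. The bound itself ($d_i\le 1+p_1+\dots+p_{i-1}=O(n)$) is correct, but it does not follow from Theorem~\ref{thm:main}, which only speaks about the evaluation at $X=2$, and your supporting claim that all coefficients of $\phi_{\boldsymbol{p}}[X]$ above degree $n$ vanish is false: for the paper's own example $\boldsymbol{p}=\SEQ{1,2,4,2}$, of size $n=9$, the polynomial $\phi_{\boldsymbol{p}}[X]$ has leading term $28X^{10}$. The intermediate coefficients are signed inclusion--exclusion weights, not counts $M(\boldsymbol{p}',i)$, and the number of dangling half edges may legitimately exceed the number of remaining decision nodes (several half edges may share a target, and they may point to $\Fnode$ or $\Tnode$), so the proposed combinatorial vanishing argument has no footing. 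The correct --- and simpler --- justification is precisely the ``direct inspection'' you set aside: by \eqref{eq:Rrm}, $\phi_r[X^m]$ has degree at most $m+r$ (the product $\prod_{i=0}^{r-1}(X^2-X-i)$ has degree $2r$ and the sum has degree $m-r$), so by linearity each iteration increases the degree by at most $p_i$; starting from $X$, the degree never exceeds $1+\sum_i p_i = n+1 = O(n)$, and there is no scenario in which it could reach $2n$. With that substitution, your arithmetic and space estimates stand and the proposition follows as in the paper.
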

\begin{proof}
  The number of \textsc{robdd}s is~$(\phi_{\boldsymbol{p}}[X])_{X=2}$. 
  The polynomial~$\phi_{\boldsymbol{p}}[X]$ is computed by iterating~$k$ times a linear map of type~$\phi_r$ starting from an initial polynomial~$X$.    
  By Proposition~\ref{prop:basicStep}, the algorithm performs~$O(n^2)$ operations 
  for each iteration since polynomials have~$O(n)$ coefficients. 
  The total computation thus performs~$O(k\; n^2)$ arithmetical operations over~$\mathbb{Z}$ and use~$O(n)$ memory space to store coefficients.
  Evaluating~$(\phi_{\boldsymbol{p}}[X^m])_{X=2}$ at~$X=2$ can be done in time complexity~$O(n)$ (by Horner's method for instance).
\end{proof}

\subsection{Generating function for ROBDD size} 
The main goal of this section is compute the distribution of Boolean functions in~$k$ variables
according to the \textsc{robdd} size. 
For~$f\in\mathcal{B}_k$ a Boolean function, we let~$\lambda(f)$ be the size of its \textsc{robdd}, i.e., its number of decision nodes.
A convenient way to represent the distribution of the size~$\lambda$ on~$\mathcal{B}_k$ consists in computing
the generating function \cite{FlajoletSedgewick2009}
\[
  F_k(u)  = \sum_{f \in \mathcal{B}_k} u^{\lambda(f)} = \sum_{i\ge 0} f_i u^i,
\]
where~$u$ is a formal variable marking the size. Then for $i \ge 0$, the coefficient $f_i=[u^i] F_k(u)$ is the number of \textsc{robdd}s of size~$i$ with~$k$ variables, i.e. the notation $[u^i] F_k(u)$ corresponds to the coefficient-extraction of the monomial $u^i$. We also introduce the truncation~$F_k^{\le n}(u)= \sum_{0 \le i\le n} f_i u^i$ as the generating functions of \textsc{robdd}s of size less than or equal to~$n$.

We first extend the formalism introduced in Section~\ref{sec:counting} and define a linear map~$\varphi:\mathbb{Z}[u,X] \to \mathbb{Z}[u, X]$.
\begin{definition}
  \label{def:varphi}
The linear map~$\varphi:\mathbb{Z}[u,X] \to \mathbb{Z}[u, X]$ is defined via its action on the basis~$(u^rX^m)_{r, m \ge 0}$ as
\begin{equation}
  \label{eq:varphi}
 \varphi: u^rX^m \mapsto \varphi[u^rX^m] = \sum_{i=0}^{m+1} u^{r+i} \phi_i[X^m]. 
\end{equation}
\lipicsEnd
\end{definition}
With this notation we have the following proposition.
\begin{proposition}[Generating function for \textsc{robdd} size]
\label{thm:gf}
The generating function enumerating Boolean functions by considering the \textsc{robdd} size is given by
\[
  F_k(u) = \left(\varphi^k[X]\right)_{X=2},
  \quad\text{where~$\varphi^k$ denotes the composition product~$\underbrace{\varphi\circ \dots \circ \varphi}_{\text{$k$ times}}$}.
\]
\lipicsEnd
\end{proposition}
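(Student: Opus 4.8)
The plan is to establish, by induction on $k$, a statement about multientry \textsc{robdd}s that is slightly more general than the proposition — exactly the strategy used for Theorem~\ref{thm:main} — but now carrying the extra variable $u$ that records sizes. For $k,m\ge 0$ define the polynomial
\[
  \widetilde{M}_k(u,m) \;=\; \sum_{\boldsymbol{p}} M(\boldsymbol{p}, m)\, u^{p_1+\dots+p_k},
\]
the sum ranging over all length-$k$ profiles $\boldsymbol{p}=\SEQ{p_1,\dots,p_k}$; this is the size generating function of multientry \textsc{robdd}s with $k$ layers and $m$ incoming half edges, and it is a genuine polynomial since sizes are bounded. First I would record the reduction to $m=1$: by Remark~\ref{rem:basic} the integer $M(\boldsymbol{p},1)$ is the number of \textsc{robdd}s with profile $\boldsymbol{p}$, whose size is $p_1+\dots+p_k$, and by canonicity the length-$k$ profiles partition $\mathcal{B}_k$. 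Hence $F_k(u)=\widetilde{M}_k(u,1)$, and it suffices to prove $\widetilde{M}_k(u,m)=(\varphi^k[X^m])_{X=2}$ for all $k,m\ge 0$.

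The base case $k=0$ is immediate: there are no decision nodes, so $\widetilde{M}_0(u,m)=M(\SEQ{\,},m)=2^m=(X^m)_{X=2}=(\varphi^0[X^m])_{X=2}$. For the inductive step I would reuse, verbatim, the layer decomposition already proved inside Theorem~\ref{thm:main}: writing a length-$(k{+}1)$ profile as $\SEQ{r}\cdot\boldsymbol{p}$, that proof establishes, for every $r\ge 0$, the identity $M(\SEQ{r}\cdot\boldsymbol{p},m)=\sum_i c_i\,M(\boldsymbol{p},i)$, where $c_i=c_i^{(r,m)}$ is the coefficient of $X^i$ in $\phi_r[X^m]$ (the cases $r=0$, where $\phi_0$ is the identity, and $r>m$, where both sides vanish, are included). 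Multiplying by $u^r$, summing over $r\ge 0$, and summing over the length-$k$ tail $\boldsymbol{p}$ gives
\[
  \widetilde{M}_{k+1}(u,m) \;=\; \sum_{r\ge 0} u^r \sum_i c_i^{(r,m)}\, \widetilde{M}_k(u,i),
\]
where the weight $u^r$ is precisely what accounts for the $r$ new decision nodes of the top layer.

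It then remains to invoke the induction hypothesis $\widetilde{M}_k(u,i)=(\varphi^k[X^i])_{X=2}$ and to push the outer $\sum_{r\ge 0}u^r(\cdot)$ through $\varphi^k$. The key observation — and the point needing the most care — is that $\varphi$, hence $\varphi^k$, is not merely $\mathbb{Z}$-linear but $\mathbb{Z}[u]$-linear: from Definition~\ref{def:varphi} one has $\varphi[u^r X^m]=u^r\,\varphi[X^m]$, so $\varphi^k$ commutes with multiplication by powers of $u$. Using this, together with $\sum_i c_i^{(r,m)} X^i=\phi_r[X^m]$ and $\sum_{r\ge 0}u^r\phi_r[X^m]=\varphi[X^m]$ (the summands with $r>m$ vanish, so this matches the finite sum in \eqref{eq:varphi}), I obtain
\[
  \sum_{r\ge 0} u^r \sum_i c_i^{(r,m)}\, \varphi^k[X^i] \;=\; \sum_{r\ge 0} u^r\, \varphi^k\bigl[\phi_r[X^m]\bigr] \;=\; \varphi^k\Bigl[\textstyle\sum_{r\ge 0} u^r \phi_r[X^m]\Bigr] \;=\; \varphi^{k+1}[X^m].
\]
Evaluating at $X=2$ closes the induction, $\widetilde{M}_{k+1}(u,m)=(\varphi^{k+1}[X^m])_{X=2}$; taking $m=1$ and using $F_k(u)=\widetilde{M}_k(u,1)$ yields the stated formula.

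I expect the substantive (if not lengthy) work to be twofold. First, justifying that $\varphi$ legitimately ``guesses'' the size of the next layer — that weighting $\phi_r$ by $u^r$ is the correct size contribution and that summing an \emph{a priori} infinite family over $r$ is harmless because all but finitely many terms vanish; this is exactly why Definition~\ref{def:varphi} is the right lift of the family $(\phi_r)_{r\ge 0}$. Second, the $\mathbb{Z}[u]$-linearity of $\varphi^k$, which is slightly stronger than the $\mathbb{Z}$-linearity used for Theorem~\ref{thm:main} and is what makes the composition step go through. Everything else is a transcription of the induction already carried out there, plus the short bookkeeping identifying $m=1$ multientry \textsc{robdd}s with ordinary \textsc{robdd}s.
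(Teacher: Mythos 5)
Your proof is correct and follows essentially the same route as the paper: the paper's own argument is just the two-sentence observation that each application of $\varphi$ adds a layer with $u$ marking its decision nodes, and your induction on $k$ via $\widetilde{M}_k(u,m)$, reusing the per-layer identity from Theorem~\ref{thm:main} together with the $\mathbb{Z}[u]$-linearity of $\varphi$, is simply a careful formalization of exactly that idea. Nothing in your argument deviates from, or adds a gap to, the paper's approach.
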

\begin{proof}
Each application of~$\varphi$ corresponds with adding a layer. The formal variable~$u$ marks the number of decision nodes added on the current layer.
\end{proof}
We remark that in practice we can truncate polynomials, keeping only the terms useful along the computation. The key point here is that if we consider \textsc{robdd}s with size bounded by~$n$, we should make all computation modulo~$u^{n+1}$. Indeed, the formal variable~$u$ marks the number of nodes (which is bounded by~$n$). Sections \ref{appendix:iterate} and \ref{appendix:truncate} int the appendix illustrate this point. 

Algorithm \ref{algo:onelayer} computes the bivariate polynomial~$\varphi[X^m]$ for~$m\ge 0$. Algorithm \ref{algo:memo}, computes recursively the iterated (univariate) version~$\left(\varphi^{\ell}[X^m]\right)_{X=2}$ (that is the evaluation at $X=2$). 
\begin{figure}[htb]
  \begin{minipage}[t]{.44\linewidth}
\vspace*{3pt}
 \begin{algorithm}[H]
\footnotesize
 \SetAlgoProcName{Function}{function}
\caption{\footnotesize Computing~$\varphi[X^m]$}
\label{algo:onelayer}
\SetKwFunction{Parcours}{parcoursProfondeur}
\KwIn{An integer~$m \ge 0$}
\KwOut{Returns~$\varphi[X^m] \in \mathbb{Z}[u, X]$}
$Q\gets 0$ \Comment*{$Q\in \mathbb{Z}[u]$}
\For{$r$ \From $0$ \KwTo $m$}{
 $Q \gets Q + u^r \phi_r[X^m]$
}
\Return{$Q$}
\end{algorithm}
\end{minipage}\hfill
\begin{minipage}[t]{.54\linewidth}
 \begin{algorithm}[H]
\footnotesize
 \SetAlgoProcName{Function}{function}
\caption{\footnotesize Computing $\left(\varphi^{\ell}[X^m]\right)_{X=2}$}
\label{algo:memo}
\SetKwFunction{Parcours}{parcoursProfondeur}
\KwIn{Two integers~$\ell, m$}
\KwOut{Returns~$\left(\varphi^{\ell}[X^m]\right)_{X=2} \in \mathbb{Z}[u]$}
\Comment{N.B.: Computations done modulo~$u^{n+1}$ where~$n$ is the maximal size for \textsc{robdd}s}
\lIf*{$\ell=0$}{\Return~$2^m$ \Comment*{base case}}
$Q\gets 0$ \Comment*{$Q\in \mathbb{Z}[u]$}
$R\gets \varphi(X^m)$ \Comment*{Call Alg. \ref{algo:onelayer}}
\For{$j$ \From $0$ \KwTo $\degree_X(R)$}{
 $M \gets \left(\varphi^{\ell-1}[X^j]\right)_{X=2}$ \Comment*{Call Alg. \ref{algo:memo}}
 $N \gets [X^j]\  R(u, X)$ \Comment*{$N \in \mathbb{Z}[u]$}
 $Q \gets Q + M \cdot N$
}
\Return{$Q$}
\end{algorithm}
\end{minipage}
\end{figure}
\begin{lemma}
\label{lem:onelayer}
Algorithm \ref{algo:onelayer} computes~$\varphi[X^m] \in\mathbb{Z}[X,u]$, which has~$O(n^2)$ integer coefficients. It performs~$O(n^2)$ arithmetical operations over~$\mathbb{Z}$. 

Algorithm \ref{algo:memo} computes~$\left(\varphi^{\ell}[X^m]\right)_{X=2} \in \mathbb{Z}[u]$, which has~$O(n)$ coefficients. If we omit recursive calls, it performs~$O(n^3)$ arithmetical operations over~$\mathbb{Z}$, using memoization techniques with~$O((n+k)n^2)$ extra memory storage. 
\lipicsEnd
\end{lemma}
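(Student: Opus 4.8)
The plan is to treat the two algorithms separately, and for each one bound (i) the size of the output polynomial, (ii) the number of arithmetic operations, and (iii) the memory footprint. For Algorithm~\ref{algo:onelayer}, I would first observe that the loop runs for $r$ from $0$ to $m$, and each iteration adds $u^r\phi_r[X^m]$ to the accumulator $Q$. By \eqref{eq:Rrm} the polynomial $\phi_r[X^m]$ has degree $r+m$ in $X$, hence (restricting to $\textsc{robdd}$s of size at most $n$, so $m\le n+1$ and $r\le n$) degree $O(n)$ in $X$ and degree $O(n)$ in $u$; thus $\varphi[X^m]$ has $O(n^2)$ integer coefficients. Since the $\phi_r[X^m]$ are available from the precomputation step (Lemma~\ref{lem:linear}), each of the $O(n)$ iterations costs $O(n)$ additions on the $O(n)$-many $X$-coefficients of a fixed $u$-degree, i.e.\ $O(n)$ operations per iteration — wait, more carefully: adding $u^r\phi_r[X^m]$ touches $O(n)$ coefficients, so one iteration is $O(n)$ operations and the loop is $O(n^2)$. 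I would double-check that this $O(n^2)$ count is consistent with the $O(n^2)$ coefficients of the output, which it is.

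For Algorithm~\ref{algo:memo}, I would argue as follows. The output $\left(\varphi^{\ell}[X^m]\right)_{X=2}$ is a polynomial in $u$ alone; since $u$ marks the cumulative number of decision nodes, and computations are carried out modulo $u^{n+1}$, it has $O(n)$ coefficients. For the operation count with recursive calls omitted: one invocation calls Algorithm~\ref{algo:onelayer} once ($O(n^2)$ operations), then loops over $j$ from $0$ to $\deg_X(R) = O(n)$; in each iteration it multiplies $M\in\mathbb{Z}[u]$ (with $O(n)$ coefficients) by $N = [X^j]R(u,X)\in\mathbb{Z}[u]$ (also $O(n)$ coefficients, truncated mod $u^{n+1}$), which is $O(n^2)$ operations by naive polynomial multiplication, and adds it to $Q$. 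Hence the body of the loop costs $O(n^2)$ and there are $O(n)$ iterations, giving $O(n^3)$ operations per invocation, dominating the $O(n^2)$ setup cost. Finally, the memoization: the recursion memoizes values $\left(\varphi^{\ell}[X^j]\right)_{X=2}$ for $0\le \ell\le k$ and $0\le j\le n$, i.e.\ $O((k+n)\cdot n) = O(kn+n^2)$ such polynomials — hmm, I need to be careful here; the stored indices range over $\ell\in\{0,\dots,k\}$ and $j$ up to the maximal $X$-degree encountered, which is $O(n)$, so $O((k+n)n)$ entries — each with $O(n)$ coefficients, yielding $O((k+n)n^2)$ integers of extra storage, matching the claim.

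The main obstacle I anticipate is pinning down precisely the degree bounds that make everything $O(n)$ rather than something larger: one must use the hypothesis that we only care about $\textsc{robdd}$s of size $\le n$ to simultaneously cap the number of half edges ($m\le n+1$), the number of nodes per layer ($r\le n$), and — crucially — to justify truncating all $u$-series modulo $u^{n+1}$ without affecting correctness of the coefficients $f_0,\dots,f_n$. A secondary subtlety is the memoization bookkeeping: I would want to verify that the set of index pairs $(\ell,j)$ actually reached by the recursion is contained in $\{0,\dots,k\}\times\{0,\dots,O(n)\}$, since the $X$-degree can only grow by at most $1$ per application of $\varphi$ in the relevant regime, and is in any case bounded by the $O(n)$ ceiling imposed by the size constraint; once that containment is established, the $O((k+n)n^2)$ memory bound and the fact that the total number of distinct (non-recursive) invocations is $O((k+n)n)$ follow immediately. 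The per-invocation bounds stated in the lemma are exactly what is needed to then derive the global $O(k\,n^4)$ time bound announced in the introduction.
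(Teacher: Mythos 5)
Your proposal is correct and follows essentially the same route as the paper's proof: the same per-iteration accounting for Algorithm~\ref{algo:onelayer} ($O(n)$ additions of polynomials with $O(n)$ coefficients, giving $O(n^2)$ operations and an output with $O(n^2)$ coefficients), the same analysis of Algorithm~\ref{algo:memo} via truncation modulo $u^{n+1}$ and naive $O(n^2)$ multiplications over $O(n)$ values of $j$, and the same memoization bookkeeping yielding $O((k+n)n^2)$ stored integers, with the correctness of truncation left (as in the paper) to a routine induction on $\ell$. The only slip is the aside that the $X$-degree grows by at most $1$ per application of $\varphi$ — in fact $\deg_X \phi_r[X^m]=m+r$ can be as large as $2m$ — but this is harmless since you also invoke the $O(n)$ cap coming from $m\le n+1$ and $r\le n$, which is what the bound actually rests on.
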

\begin{proof}
\label{lem:algos}
  For Algorithm~\ref{algo:onelayer}, we perform~$m=O(n)$ additions of polynomials in~$\mathbb{Z}[X,u]$ with~$O(n)$ terms, yielding~$O(n^2)$ arithmetical operations over~$\mathbb{Z}$. The result is a bivariate polynomial of bounded degree ($n$ for the variable $u$, $2n$ for the variable~$X$) yielding~$O(n^2)$ coefficients. We suppose Algorithm~\ref{algo:onelayer} has access to polynomials~$\phi_r[X^m] \in \mathbb{Z}[X]$ from a precomputation step.

  For Algorithm~\ref{algo:memo}, two ingredients are essential. Firstly we have to truncate polynomials modulo~$u^{n+1}$ so that operations on univariate polynomials have complexity~$O(n)$ for addition and~$O(n^2)$ for multiplication (using the naive multiplication on polynomials).
  Secondly we also use memoization techniques (meaning we compute in lazy manner intermediate results only once and keep it for further reference, at the expense of memory storage). That means that we consider that at the time we compute~$\left(\varphi^{\ell}[X^m]\right)_{X=2}$, the polynomials~$\left(\varphi^{\ell-1}[X^j]\right)_{X=2}$ are available (i.e., their complexity is taken into account independently). By an amortizing argument, the complexity of computing  the complete family of polynomials~$\left(\varphi^{\ell}[X^m]\right)_{X=2}$ ($0 \le \ell \le k$ and~$0\le m \le n+1$) is still~$O(n^3)$ arithmetical coefficients  per polynomial. We need to store~$O(kn^2)$ integer coefficients for memoization of all intermediate polynomials. 
  We also suppose Algorithm~\ref{algo:memo} has access to bivariate polynomials~$(\varphi[X^m] \in \mathbb{Z}[u, X])_{0 \le m \le n+1}$ from a precomputation step which requires~$O(n^3)$ memory space for integer coefficients.
  A subtle point is to understand that we can truncate polynomials at each step in Algorithm~\ref{algo:memo} and still get the correct result: this can be proved by recurrence on~$\ell$ (see also Section~\ref{appendix:truncate} of the appendix for an example). 
\end{proof}

Algorithm~\ref{algo:memo} also computes the generating function of \textsc{robdd}s for size up to $n$ since posing~$\ell=k$ and~$m=1$ we have
\[
  F_{k}^{\le n}(u) = \left(\varphi^{\ell}[X]\right)_{X=2} \mod u^{n+1}.
\]
\begin{theorem}[Algorithm for computing the exact distribution]
  \label{thm:complexityCountAll}
  We compute the generating function~$F^{\le n}_k(u)$ of Boolean functions in~$\mathcal{B}_k$ for \textsc{robdd}s of size less than or equal to~$n$ using~$O(k\; n^4)$ arithmetical operations in~$\mathbb{Z}$ and~$O((k+n)\; n^2)$ for memory space storing integers.
  \lipicsEnd
  \end{theorem}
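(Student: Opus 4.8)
The plan is to combine Lemma~\ref{lem:linear} (precomputation of the univariate maps~$\phi_r[X^m]$), Lemma~\ref{lem:onelayer} (the two algorithms computing~$\varphi[X^m]$ and the memoized iterated maps~$\left(\varphi^{\ell}[X^m]\right)_{X=2}$), and Proposition~\ref{thm:gf} (which identifies~$F_k(u)$ with~$\left(\varphi^k[X]\right)_{X=2}$). First I would argue correctness: by Proposition~\ref{thm:gf} we have $F_k(u) = \left(\varphi^k[X]\right)_{X=2}$, and since every \textsc{robdd} counted in~$F_k^{\le n}(u)$ has at most~$n$ decision nodes, the formal variable~$u$ (which marks node count) never reaches exponent~$n+1$; hence all computations may be carried out modulo~$u^{n+1}$ without affecting the coefficients~$f_i$ for~$0\le i\le n$. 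The subtle part of correctness --- that truncating modulo~$u^{n+1}$ at \emph{every} intermediate step of Algorithm~\ref{algo:memo} still yields the correct truncated result --- follows by a straightforward induction on~$\ell$, already sketched in Lemma~\ref{lem:onelayer} and illustrated in the appendix; I would just invoke it. Setting~$\ell=k$ and~$m=1$ in Algorithm~\ref{algo:memo} then returns exactly~$F_k^{\le n}(u) \bmod u^{n+1}$.

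Next I would assemble the complexity bounds by adding up the three contributions. The precomputation of the family~$(\phi_r[X^m])$ for~$0\le r\le n$, $0\le m\le n+1$ costs~$O(n^4)$ arithmetic operations and~$O(n^3)$ integer storage by Lemma~\ref{lem:linear}. Building the bivariate polynomials~$(\varphi[X^m])_{0\le m\le n+1}$ via Algorithm~\ref{algo:onelayer} costs~$O(n^2)$ operations each by Lemma~\ref{lem:onelayer}, hence~$O(n^3)$ in total, with~$O(n^3)$ storage for the~$O(n)$ bivariate polynomials of~$O(n^2)$ coefficients each. Finally, computing the whole memoized family~$\left(\varphi^{\ell}[X^m]\right)_{X=2}$ for~$0\le\ell\le k$ and~$0\le m\le n+1$: by Lemma~\ref{lem:onelayer} each polynomial, discounting recursive calls (which are resolved once and stored), costs~$O(n^3)$ operations, and there are~$O(kn)$ of them, giving~$O(k\,n^4)$ operations; the memoization table holds~$O(kn)$ univariate polynomials of~$O(n)$ coefficients, i.e.~$O(kn^2)$ integers. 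Summing: $O(n^4)+O(n^3)+O(k\,n^4) = O(k\,n^4)$ arithmetic operations, and $O(n^3)+O(n^3)+O(kn^2) = O((k+n)\,n^2)$ integers of storage, which is the claimed bound (the bit length~$O(n\log n)$ of each integer was already recorded at the start of the section and is not part of this accounting).

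The main obstacle is not the arithmetic bookkeeping but making the amortized/memoized complexity argument watertight: one must be careful that the recursion tree of Algorithm~\ref{algo:memo} does not recompute~$\left(\varphi^{\ell-1}[X^j]\right)_{X=2}$ repeatedly, so that charging~$O(n^3)$ \emph{per distinct} pair~$(\ell,j)$ and multiplying by the number~$O(kn)$ of such pairs is legitimate. I would make this explicit by observing that each call at level~$\ell$ with exponent~$m$ triggers exactly one pass over~$j=0,\dots,\deg_X(\varphi[X^m]) = O(n)$, performing one table lookup and one multiplication of univariate polynomials of degree~$O(n)$ (cost~$O(n^2)$) per~$j$, for~$O(n^3)$ per call; with memoization there are~$O(kn)$ calls whose bodies are actually executed. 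A second minor point to address is that multiplication~$M\cdot N$ in Algorithm~\ref{algo:memo} is of two polynomials in~$\mathbb{Z}[u]$ truncated modulo~$u^{n+1}$, so naive multiplication indeed costs~$O(n^2)$ and not more; this is where the truncation is essential to the complexity (without it, degrees in~$u$ could in principle grow). Once these are pinned down, the theorem follows by summation.
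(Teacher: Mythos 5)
Your proposal is correct and follows essentially the same route as the paper: it assembles Lemma~\ref{lem:linear}, Lemma~\ref{lem:onelayer} and Proposition~\ref{thm:gf}, charges $O(n^3)$ operations per distinct memoized polynomial $\left(\varphi^{\ell}[X^j]\right)_{X=2}$ over the $O(kn)$ pairs $(\ell,j)$, and sums the $O(n^3)$ storage for the precomputed families with the $O(kn^2)$ memoization table, exactly as in the paper's proof. Your added care about truncation modulo $u^{n+1}$ and the legitimacy of the amortized memoization count only makes explicit what the paper delegates to Lemma~\ref{lem:onelayer} and the appendix.
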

  \begin{proof}
From Lemmas \ref{lem:onelayer} and \ref{lem:linear}, the overall complexity is dominated by the computation of the family~$(\phi_r[X^m])_{r, m}$ and the calls to Algorithm \ref{algo:memo}.
By Lemma~\ref{lem:onelayer}, each polynomial~$\left(\varphi^{\ell}[X^j]\right)_{X=2}$ is computed with~$O(n^3)$ arithmetic operations on integer coefficients and there are~$O(k\; n)$ such polynomials. Hence, the total number operations over~$\mathbb{Z}$ is~$O(k\; n^4)$. Furthermore, we store~$O(k\; n^2)$ coefficients in~$\mathbb{Z}$ for memoization. We also store~$O(n^3)$ coefficients for the family~$(\phi_r[X^m])_{m, r} \in \mathbb{Z}[X]$. This yields the claimed complexity.
\end{proof}

To evaluate the complete size distribution we need to consider the size of largest \textsc{robdd}s with~$k$ variables.
\begin{theorem}[Maximal size of \textsc{robdd}s]
\label{theo:largest_robdd}
  Let~$k \ge 1$ be an integer, the maximal number of nodes in a \textsc{robdd} 
  with at most~$k$ variables is
  \[
	  M_k=2^{k-\theta}-3+2^{2^{\theta}}, \quad \text{ with~$\theta = 
	  	\left\lfloor \log_2\left(k - \left\lfloor \log_2\left(k\right)\right\rfloor\right)\right\rfloor$}.
  \]
  The generating function~$F_k(u)$ of Boolean functions in~$\mathcal{B}_k$ according to the \textsc{robdd} size can be computed with~$O(2^{4k}/k^3)$ arithmetical operations in~$\mathbb{Z}$ and uses~$O(2^{2k}/k)$ space.
  \lipicsEnd
\end{theorem}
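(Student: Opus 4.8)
The plan is to establish Theorem~\ref{theo:largest_robdd} in two independent parts: first the closed form for the maximal size $M_k$, then the complexity statement for computing the full generating function $F_k(u)$.

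For the formula for $M_k$, I would argue on the profile level. The largest \textsc{robdd}s arise by making each layer as wide as possible. Layer $i$ (the one labelled $x_i$) can contain at most $2^{2^{k-i}}$ distinct nodes, since each node there represents a distinct Boolean function in the $k-i$ remaining variables; but it is also bounded by the number of available targets in the layers below, which grows by roughly doubling per layer. Concretely, if $w_i$ denotes the width of layer $i$ counted from the bottom (so layer $k$ is the one just above $\{\bot,\top\}$), then $w_1 = \min(2^{2^0}, 2) $ is trivial and in general the width of a layer is bounded both by a \emph{doubly exponential} quantity (distinct functions) and by an \emph{exponential} quantity (function of the number of nodes below it, essentially $2 \cdot (\text{nodes below})$, since each node picks an ordered pair of lower targets but cannot duplicate). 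The crossover between these two regimes happens at the level $\theta$ determined by solving $2^{2^{\theta}} \approx 2^{k-\theta}$, which after accounting for the $\lfloor\log_2 k\rfloor$ correction gives exactly $\theta = \lfloor \log_2(k - \lfloor\log_2 k\rfloor)\rfloor$. Below level $\theta$ the layers are saturated at the doubly-exponential bound, contributing the term $2^{2^{\theta}}$; above it the layers are saturated at the ``one node short of doubling'' bound, contributing a geometric sum $\approx 2^{k-\theta}$, and collecting constants (the $-3$ accounts for the two constant nodes and one off-by-one in the geometric sum) yields $M_k = 2^{k-\theta} - 3 + 2^{2^{\theta}}$. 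This is essentially Knuth's analysis of the maximal profile \cite{Knuth11}, which I would cite for the detailed verification that this profile is simultaneously achievable.

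For the complexity claim, I would simply instantiate Theorem~\ref{thm:complexityCountAll} with $n = M_k$. That theorem gives $O(k\, n^4)$ arithmetical operations and $O((k+n)n^2)$ integer storage for computing $F_k^{\le n}(u)$; since $F_k(u) = F_k^{\le M_k}(u)$, we just substitute. Now $M_k = \Theta(2^k/k)$: indeed $2^{2^\theta} = \Theta(2^{k-\theta})$ by the definition of $\theta$, and since $\theta \approx \log_2(k - \log_2 k) \approx \log_2 k$, we get $2^{k-\theta} = \Theta(2^k / k)$. Plugging in, $k \cdot M_k^4 = k \cdot \Theta(2^{4k}/k^4) = \Theta(2^{4k}/k^3)$ for the number of operations, and $(k + M_k)\, M_k^2 = \Theta(M_k^3) = \Theta(2^{3k}/k^3)$; but the memory bound stated, $O(2^{2k}/k)$, is actually the dominant storage coming from the coefficient \emph{arrays} of the final polynomial $F_k(u)$ itself, which has $M_k + 1 = O(2^k/k)$ coefficients each of bit-length $O(M_k \log M_k) = O(2^k \cdot k / k) = $ wait — I would be careful here and recount: the $O(2^{2k}/k)$ should be read as $O(M_k^2 / \text{something})$ or simply as the space to hold $O(M_k)$ integers of the relevant bit length, and I would reconcile the precise exponent by tracking that the memoization table needs only $O((k+n)n^2)$ \emph{integers} but the $O(n^3)$ precomputation of the $(\phi_r[X^m])$ family dominates; with $n = M_k = \Theta(2^k/k)$ this is $\Theta(2^{3k}/k^3)$ integers. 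I suspect the stated $O(2^{2k}/k)$ refers to a refined analysis where the precomputed family can be streamed rather than stored, reducing the space, and I would either prove that refinement or adjust the bound accordingly.

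The main obstacle is the bookkeeping in the first part: verifying rigorously that the two upper bounds on layer widths (doubly-exponential versus exponential-in-nodes-below) combine to give \emph{exactly} $2^{k-\theta} - 3 + 2^{2^\theta}$ with the correct constant $-3$ and the correct crossover index $\theta$, including the nested floor coming from the $\lfloor\log_2 k\rfloor$ term, and then exhibiting an explicit profile that attains it. This is a delicate but elementary extremal argument, and the cleanest route is to invoke Knuth's treatment of the maximal \textsc{robdd} profile and merely re-derive the closed form from his recurrence. The complexity part is then routine asymptotic substitution, modulo pinning down the precise space bound as discussed above.
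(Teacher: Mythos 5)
You follow essentially the same route as the paper. For the closed form of $M_k$ the paper, like you, gives no self-contained extremal argument: it notes the formula is equivalent to the one of von Br\"omssen \cite{oeisA327461} and takes the existence of $\theta$ from \cite{NV19}, just as you defer the verification to Knuth \cite{Knuth11}; for the complexity it simply substitutes $n=M_k\approx 2^k/k$ into Theorem~\ref{thm:complexityCountAll}, exactly as you do, which gives the $O(k\,M_k^4)=O(2^{4k}/k^3)$ operation count.

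Two caveats. First, the bounds in your informal profile argument are misstated: the exponential constraint on the width $p_i$ of the layer of $x_i$ is a top-down one, namely $p_i\le 2(p_1+\cdots+p_{i-1})$ (every non-root node must absorb one of the half-edges dangling from the layers above), hence $p_i\le 2^{i-1}$; it is not a count of ``targets below'' (choosing an ordered pair of distinct children among $N$ lower nodes gives $N(N-1)$ possibilities, not $2N$). Likewise the doubly exponential bound for that layer is the number of functions of $x_i,\dots,x_k$ depending essentially on $x_i$, i.e.\ $2^{2^{k-i+1}}-2^{2^{k-i}}$, not $2^{2^{k-i}}$. Since you ultimately cite Knuth, this does not sink the argument, but the sketch as written would not reproduce the exact constants ($-3$ and $2^{2^{\theta}}$). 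Second, your hesitation about the space bound is warranted: substituting $n=M_k=\Theta(2^k/k)$ into the $O((k+n)n^2)$ bound of Theorem~\ref{thm:complexityCountAll} gives $O(n^3)=O(2^{3k}/k^3)$ stored integers, dominated by the precomputed family $(\phi_r[X^m])_{r,m}$, whereas the stated $O(2^{2k}/k)$ matches only the $O(kn^2)$ memoization term; the paper's one-line proof does not address this, so the refinement you allude to (avoiding storage of the whole family, e.g.\ producing the $\phi_r[X^m]$ as they are consumed) is indeed what would be needed to justify the stated space, or else the bound should be weakened to $O(2^{3k}/k^3)$.
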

\begin{proof}
	Note that this formula is equivalent to the one given without proof by 
	Pontus von Brömssen~\cite{oeisA327461}. The existence of~$\theta$ is proved in~\cite{NV19} and from there we can derive the explicit expression of~$\theta$. Then substituting~$n=M_k \approx 2^k/k$ in~Theorem~\ref{thm:complexityCountAll} yields the complexity result.
\end{proof}
Note this is the polynomial with respect to the maximal size of a \textsc{robdd} for~$k$ variables, hence a huge improvement compared to exponential brute force algorithms enumerating all~$2^{2^k}$ Boolean functions and computing their \textsc{robdd} size obtained after a compaction process. With a careful implementation, we can achieve the computation of~$F_k(u)$ for~$k$ up to~$11$ variables on a personal computer, and, on a high-performance computer with~$512$ GB RAM memory, for~$k=12$ in less than 1 hour 30 minutes, and even for~$k=13$ in  less than 30 hours using the \texttt{PyPy} implementation of \texttt{Python}.

\section{Conclusion}
As an application of the counting approach of this paper
we are able sample at random \textsc{robdd}s. More precisely, we can efficiently and uniformly pick \textsc{robdd}s either
according to a given size, or even a given profile or a given spine.
This is a great improvement when comparing to the classical uniform random generation over the set
of Boolean functions, like in~\cite{NV19}, that is drastically biased to the largest \textsc{robdd}s
due to the Shannon effect. For instance with~$12$ variables, the probability of drawing uniformly 
a Boolean function giving a \textsc{robdd} of (quadratic in~$k$) size~$144=12^2$ is approximately~$1.212 \cdot 10^{-957}$.

In practice, several classical functions have \textsc{robdd}s of small size. For example the symmetrical functions in~$k$ variables are associated with \textsc{robdd}s of quadratic size in~$k$ (see \cite{Knuth11}). Hence, the approach described in this paper leads the way to provide (polynomial) uniform random generator for \textsc{robdd}s of small size (i.e., of size less than exponential). 

An interesting future work consists in enumerating the \textsc{bdd} structures where our counting and sampling methods can be applied such as (non-reduced) \textsc{obdd}, or \textsc{zdd} which generally used to represent sets.

Finally, another research direction consists in noting that the generating function of \textsc{robdd}s with both size and number of variables can be specified  thanks to an iterative process as in Theorem~\ref{thm:gf}. It would be interesting to see if the machinery of analytic combinatorics \cite{FlajoletSedgewick2009} is amenable to this kind of specification.




\bibliography{ClementGenitrini}

%
\appendix
\section{Appendix}


\subsection{Iterating~$\varphi$ (an example)}
\label{appendix:iterate}
In this section, we illustrate how $\varphi$ can be iterated to count the distribution of \textsc{robdd}s up to $4$  variables.
  Let us  consider the set of Boolean function~$\mathcal{B}_k$ for~$k\in\{1, 2, 3\}$ and pose~$H_k(u, X) = \varphi^k[X]$.
  \begin{itemize}
  \item
    $H_1(u, X) = \varphi[X]=(\phi_0(X)+u \phi_1(X)) = {\left(X^{2} - X\right)} u + X^2$. 
    We can verify that substituting~$X=2$ we get~$F_1(u)=H_1(u, 2) = 2 \, u + 2$. Indeed,~$\mathcal{B}_1 = \{ \bot, x_1, \overline{x_1}, \top\}$, with respective truth tables~$\{\texttt{00}, \texttt{01}, \texttt{10}, \texttt{11}\}$ leading to~$2$ \textsc{robdd}s of size~$1$ and~$2$ \textsc{robdd}s of size~$0$, i.e., with no decision node.
  \item
    Adding a second layer, we get~$H_2(u, X) = \varphi(H_1(u, X)) = \varphi^2[X]$, yielding 
    \begin{align*}
     H_2(u, X) =  {\left(X^{4} - 2 \, X^{3} + X\right)} u^{3} + 2 \, {\left(X^{3} - X^{2}\right)} u^{2} + 2 \, {\left(X^{2} - X\right)} u + X.
    \end{align*}
We get~$F_2(u) = 2 \, u^{3} + 8 \, u^{2} + 4 \, u + 2$, hence there are
$2, 8, 4, 2$ \textsc{robdd}s of respective sizes~$3, 2, 1$ and~$0$ for~$16$ Boolean functions on~$2$ variables.
  \item
    Iterating with a third layer, we get~$H_3(u, X)= \varphi(H_1(u, X))=\varphi^3[X]$ which has~$34$ terms and gives
    \[
      F_3(u) = 74 \, u^{5} + 88 \, u^{4} + 62 \, u^{3} + 24 \, u^{2} + 6 \, u + 2.
    \]
    Hence, there are respectively~$74, 88, 62, 24, 6$ and~$2$ \textsc{robdd}s of size~$5, 4, 3, 2, 1$ and~$0$.
\item
  Adding a fourth layer yields for~$H_4(u, X)=\varphi^k[X]$ a polynomial with~$134$ terms and 
  \[
    F_4(u) =  11160 \, u^{9} + 23280 \, u^{8} + 17666 \, u^{7} + 8928 \, u^{6} + 3248 \, u^{5} + 960 \, u^{4} + 236 \, u^{3} + 48 \, u^{2} + 8 \, u + 2.
  \]
  There are~$11160$ \textsc{robdd}s with~$4$ variables of size~$9$,~$23280$ \textsc{robdd}s of size~$8$, etc. 
\end{itemize}
Of course, we can check that~$F_k(1) = 2^{2^k}$, which is the number of Boolean functions on~$k$ variables. 

\subsection{Truncating polynomials (an example)}
\label{appendix:truncate}
In this subsection, we illustrate on an example the computational effect of truncating polynomials.

Let us consider that  the generating function~$F_{3}^{\le 3}(u)~$ of \textsc{robdd} with~$k=3$ variables and with less or equal to~$n=3$ decision nodes. The maximal size of a \textsc{robdd} for~$k=3$ variables is~$M_3=7$ ($7$ decision nodes).
Then~$\varphi^3[X]$ is a polynomial with~$34$ terms. We can write (terms who would disappear modulo~$u^{n+1}=u^4$ are grayed)
  \begin{small}
  \begin{align*}
    \varphi^3[X] &=
    \textcolor{gray}{\left(X^{8} - 4 \, X^{7} + 14 \, X^{5} - 6 \, X^{4} - 16 \, X^{3} + 5 \, X^{2} + 6 \, X\right)  u^{7}}\\
      & \textcolor{gray}{+ 4 \, {\left(X^{7} - 2 \, X^{6} - 3 \, X^{5} + 5 \, X^{4} + 4 \, X^{3} - 3 \, X^{2} - 2 \, X\right)} u^{6}} \\
      & \textcolor{gray}{ + {\left(8 \, X^{6} - 12 \, X^{5} - 11 \, X^{4} + 14 \, X^{3} + 4 \, X^{2} - 3 \, X\right)} u^{5}} \\
    & \textcolor{gray}{+ 2 \, {\left(5 \, X^{5} - 6 \, X^{4} - 5 \, X^{3} + 4 \, X^{2} + 2 \, X\right)} u^{4}} \\
    &  + {\left(9 \, X^{4} - 10 \, X^{3} - 2 \, X^{2} + 3 \, X\right)} u^{3}\\
    & + 6 \, {\left(X^{3} - X^{2}\right)} u^{2} \\
    &+ 3 \, {\left(X^{2} - X\right)} u \\
    &    + X
  \end{align*}
  \end{small}
  Truncating modulo~$u^4$ and substituting~$X=2$ yields the polynomial
  \[
  F_{3}^{\le4}(u) = 62 \, u^{3} + 24 \, u^{2} + 6 \, u + 2.
  \]
  The problem of computing all the terms before truncating the result is that there are~$O(\frac{2^{2k}}{k^2})$ terms (since~$M_k\approx 2^k/k$), hence a combinatorial explosion.
  
  In contrast, Algorithm \ref{algo:memo} works in a recursive manner and truncate polynomials along the way so that we have a polynomial number of terms.
  In the following, we describe with some details how to compute~$(\varphi^3(X))_{X=2}$.

    
  First, Algorithm \ref{algo:memo} decomposes $(\varphi^3(X))_{X=2}$ as~$(\varphi^2\circ \varphi[X]))_{X=2}$.
  In general,~$\varphi[X^m]$ is a polynomial of respective degree~$m$ and~$2m$ in variables~$u$ and~$X$ and has~$O(m^2)$ integer coefficients.
  We compute (collecting terms with respect to variable~$X$)
  \begin{equation}
    \label{eq:collectX}
  \varphi[X] = (X^2-X) u +X = X^2u +X (1-u).
  \end{equation}
Denoting~$B^{(2)}_m(u) =  (\varphi^2[X^m])_{X=2}$, our algorithm computes recursively for the basis~$(1, X, X^2)$ (removed monomials modulo~$u^4$ ar grayed)
  \begin{align*}
    B^{(2)}_0(u) &= 1 \\
    B^{(2)}_1(u) &= 2 \, u^{3} + 8 \, u^{2} + 4 \, u + 2 \\
    B^{(2)}_2(u) &= \textcolor{gray}{74 \, u^{4}} + 90 \, u^{3} + 68 \, u^{2} + 20 \, u + 4
  \end{align*}
  Then since~$\varphi$ is linear, we compute (still modulo~$u^4$) using Equation \eqref{eq:collectX}
  \begin{align*}
    (\varphi^3(X))_{X=2}
    &= B_2(u) u + B_1(u) (1-u)\\
    &= u (90 \, u^{3} + 68 \, u^{2} + 20 \, u + 4) +(1-u) (2 \, u^{3} + 8 \, u^{2} + 4 \, u + 2 )\\
    & = \textcolor{gray}{88 \, u^{4}} + 62 \, u^{3} + 24 \, u^{2} + 6 \, u + 2.
    \end{align*}
  In summary, computing modulo~$u^{n+1}$ along the process allows us to control the degree~$O(n)$ of polynomials involved, which in turn ensures that the computation stays of polynomial complexity (with respect ti arithmetical operations on integers).

Observing such curves for~$k$ from~$1$ to~$13$, we notice the exponential growth of the largest \textsc{robdd}s
when the number~$k$ of variables increases. Indeed, in Theorem~\ref{theo:largest_robdd}
we define~$M_k$ to be the size of the largest \textsc{robdd}s with~$k$ variables.
The sequence starts as
$(M_k)_{k=1,\dots, 13} = (1, 3, 5, 9, 17, 29, 45, 77, 141, 269, 509, 765, 1277).$ 


\end{document}